\renewcommand{\subsection}{\@startsection
{subsection}{2}{0mm}{\baselineskip}{-0.25cm}
{\normalfont\normalsize\em}}
\newtheorem{theorem}{Theorem}
\newtheorem{proposition}{Proposition}
\newtheorem{corollary}{Corollary}
\newtheorem{lemma}{Lemma}
{\theoremstyle{definition}

\newtheorem{example}{Example}}
\theoremstyle{remark}
\newtheorem{remark}{Remark}
\begin{document}

\title[Locally Recoverable codes from rational maps]{Locally Recoverable codes from rational maps}

\author{Carlos Munuera} 
\address{Department of Applied Mathematics, University of Valladolid, Avda Salamanca SN, 47014 Valladolid, Castilla, Spain}
\email{cmunuera@arq.uva.es}

\author{Wanderson Ten\'orio} 
 \address{Faculdade de Matem\' atica, Universidade Federal de Uberl\^ andia (UFU), Av. J. N. \' Avila 2121, 38408-902, Uberl\^ andia, MG , Brazil}
  \email{dersonwt@yahoo.com.br}

\begin{abstract}
We give a method to construct Locally Recoverable Error-Correcting codes.
This method is based on the use of rational maps between affine spaces. The recovery of erasures is carried out by Lagrangian interpolation in general and simply by one addition in some good cases. 
\end{abstract}

\keywords{error-correcting code, locally recoverable code, algebraic geometry code, Reed-Muller code, toric code}
\subjclass[2010]{94B27, 11G20, 11T71, 14G50, 94B05}

\maketitle


\section{Introduction}
\label{section1}

Error-Correcting codes are used to detect and correct any errors occurred during the transmission of information.  A {\em linear code}  of length $n$ over the finite field $\mathbb{F}_{q}$ is just a linear subspace $\mathcal{C}$ of $\mathbb{F}_{q}^n$. If $\mathcal{C}$ has dimension $k$ and minimum Hamming distance $d$ we say that it is a $[n,k,d]$ code.

{\em Locally Recoverable  (LRC) Error-Correcting} codes  were introduced in \cite{gopalan} motivated by the recent and significant use of coding techniques applied to distributed and cloud storage systems. Roughly speaking,  local recovery techniques enable us to repair  lost encoded  data by a "local procedure", which means by making use of small amount of data instead of all information contained in a codeword.

Let $\mathcal{C}$ be a $[n,k,d]$ code over $\mathbb{F}_{q}$. As a notation, given a  vector ${\bf{x}}\in\mathbb{F}_{q}^n$ and a subset $R\subseteq\{1,\dots,n\}$, we write ${\bf{x}}_R=\mbox{pr}_R({\bf{x}})$ and $\mathcal{C}_R=\mbox{pr}_R(\mathcal{C})$, where $\mbox{pr}_R$ is the projection on the coordinates of $R$. 
A coordinate $i\in\{ 1,\dots,n\}$ is {\em locally recoverable with locality $r$} if there is a {\em recovering set} $R(i)\subseteq \{1,\dots,n\}$ with $i\not\in R(i)$ and $\# R(i)= r$, such that for any codeword ${\bf{x}}\in\mathcal{C}$,  an erasure in position $i$  can be  recovered by using the information of ${\bf{x}}_{R(i)}$. That is to say, if 
for all ${\bf{x}},{\bf{y}}\in\mathcal{C}$, ${\bf{x}}_{R(i)}={\bf{y}}_{R(i)}$ implies $x_i=y_i$.
The code $\mathcal{C}$ has {\em all-symbol locality} $r$  if any coordinate is locally recoverable with locality at most $r$. We use the notation $([n, k], r)$ to refer to the parameters of such a code $\mathcal{C}$. 

The notion of local recoverability can be extended in two directions as follows:
Firstly it can be desirable to dispose of codes with multiple recovering sets for each coordinate \cite{availability}. 
The code $\mathcal{C}$  is said to have $t$ recovering sets if for each coordinate $i$  there exist disjoint sets $R(i,j)$, $j=1,\dots,t$, such that $\# R(i,j)\le r_j$ and the coordinate $x_i$ can be recovered from the coordinates of $\mbox{pr}_{R(i,j)}({\bf{x}})$ for all ${\bf{x}}\in \mathcal{C}$, as above. The existence of several recovering sets is known as the {\em availability} problem, as these codes make the recovering of loss data  better available.
Secondly we can recover more than one erasure at the same time. $\mathcal{C}$ is said to have the $(\rho,r)$ {\em locality property} if for each coordinate $i$ there is a subset $\bar{R}(i)$ containing $i$ such that  $\# \bar{R}(i)\le r+\rho-1$ and   $\rho-1$ erasures in ${\bf{x}}_{\bar{R}(i)}$ can be recovered from the remaining coordinates  of ${\bf{x}}_{\bar{R}(i)}$. 

It is clear that every code of minimum distance $d>1$ is in fact an LRC code  able to recover $\rho = d-1$ erasures, simply by taking ${\bar{R}(i)} = \{ 1,2, \dots, n \}$. But such recovering sets do not fit into the philosophy of local recovery. We are interested in codes $\mathcal{C}$ allowing smaller  recovering sets (in relation to the other parameters of $\mathcal{C}$). In this sense, we have the following Singleton-like bound: the locality $r$ of an $[n, k, d]$ code $\mathcal{C}$ obeys the relation (see \cite{gopalan})
\begin{equation} \label{Singleton-like}
\left\lceil\frac{k}{r}\right\rceil\le  n-k-d+2.
\end{equation}
Codes reaching equality are called {\em Singleton-optimal}  (or simply optimal) LRC codes, since they have the best possible relationship between these parameters. The number $n-k-d+2-\lceil{k}/{r}\rceil$ is the {\em optimal defect} of $\mathcal{C}$. Similarly we say that $\mathcal{C}$ is  {\em Singleton-almost-optimal} if, according to Eq. (\ref{Singleton-like}), no codes of type $([n,k+1,d],r)$ may exist, that is if ${(k+1)}/{r}> n-(k+1)-d+2$.

Slight modifications in the proof of Eq. (\ref{Singleton-like}) show that more generally, for $t=1,\dots,k$, we have
\begin{equation*} \label{Singleton-like-gen}
\left\lceil\frac{k-t+1}{r}\right\rceil\le  n-k-d_t+t+1
\end{equation*} 
where $d_1,\dots,d_k$ is the weight hierarchy of $\mathcal{C}$.
We deduce $r\le k$ and, from the duality of the weight hierarchy, $r\ge d^{\perp}-1$.

MDS codes (and Reed-Solomon codes in particular) satisfy $\mathcal{C}_R=\mathbb{F}_{q}^k$ for every set $R$ of $k$ positions, hence they have the largest possible locality $r=k$. Also, adding a parity check to $k$ information symbols we get a $([(k+1)t,kt,2],k)$ optimal code for all positive integer $t$ (which is MDS for $t=1$). Optimal codes that either are MDS or have the previous parameters  will be called {\em trivial}. 
The search for long nontrivial optimal codes is a challenging problem, and in fact  for most known optimal nontrivial codes, the cardinality of the ground field $\mathbb{F}_{q}$ is much larger than the code length $n$, \cite{largo1}. 

In \cite{barg1} a variation of Reed-Solomon codes for local recoverability purposes was introduced by Tamo and Barg. These so-called LRC RS codes are optimal and can have much smaller locality than RS codes themselves.  Its length is still smaller to the size of $\mathbb{F}_{q}$.
A classical way to obtain larger codes  is to use algebraic curves with many rational points. In this way LRC RS codes were extended by Barg, Tamo and Vladut \cite{barg2,barg3}, to the so-called  LRC Algebraic Geometry (LRC AG) codes, actually obtaining larger LRC codes.

In this article we propose a construction of LRC codes obtained from rational maps between affine spaces that, to a certain extent, generalizes the methods of  \cite{barg2,barg3,barg1}. The main idea is to consider a rational map $\phi:\mathbb{A}^m(\mathbb{F}_{q})\rightarrow \mathbb{A}^t(\mathbb{F}_{q})$ and a linear space $V$ of functions such that any $f\in V$ behaves as a univariate polynomial over each fibre of $\phi$. Under appropriate conditions it is possible to recover such polynomials  by using Lagrangian interpolation. Then the code $\mbox{ev}(V)$, where $\mbox{ev}$ is an evaluation map, is an LRC code whose recovering sets are the fibres of $\phi$.

The paper is organized as follows.  The general construction of our codes is developed  in Section 2. In the following two sections this construction is applied to some particular cases: to  algebraic curves in Section 3 and to other geometric sets in Section 4. Each case is illustrated with several examples.

\section{LRC codes from rational maps}
\label{section2}

The  construction of LRC codes  we propose relies on  rational maps between affine spaces,  extending ideas of  \cite{barg2,barg1}.  For the convenience of the reader, we begin  by briefly recalling those constructions.

\subsection{LRC codes from Algebraic Geometry}
\label{section2.1}

The construction of LRC RS codes is as follows \cite{barg1}: let $\mathbb{A}(\mathbb{F}_{q})$ be the affine line over $\mathbb{F}_{q}$ and let
$\mathcal{P}_1,\dots,\mathcal{P}_{t}\subset\mathbb{A}(\mathbb{F}_{q})$ be $t>1$ pairwise disjoint subsets of cardinality $r+1$ such that there exists a polynomial $\phi(x)\in\mathbb{F}_{q}[x]$ of degree $r+1$  which is constant over each $\mathcal{P}_i=\{P_{i,1},\dots,P_{i,r+1}\}$, $i=1,...,t$. Set $\mathcal{P}=\mathcal{P}_1\cup\dots\cup \mathcal{P}_{t}$ and $n=t(r+1)$. Fix an integer $k$ such that $r|k$ and $k+\frac{k}{r}\le n$, and consider the linear space of polynomials
$$
V=\left\{  \sum_{i=0}^{r-1} \; \sum_{j=0}^{k/r-1} a_{ij} \phi(x)^j x^i  \; : \; a_{ij}\in\mathbb{F}_{q}    \right\}.
$$
The code $\mathcal{C}$ is obtained by evaluation of $V$ at the points of $\mathcal{P}$, $\mbox{ev}_{\mathcal{P}}:V\rightarrow\mathbb{F}_{q}^n$, $\mbox{ev}_{\mathcal{P}}(f)=(f(P_{ij}), i=1,\dots,t, j=1,\dots,r+1)$. Since $\deg(f)\le k+\frac{k}{r}-2<n$ for all $f\in V$, the map $\mbox{ev}_{\mathcal{P}}$ is injective hence $\dim(\mathcal{C})=\dim(V)=k$. Furthermore given $f\in V$, for all $i=1,\dots,t$, there exists a polynomial $f_i(x)$ of degree $\le r-1$ such that $f_i(P_{ij})=f(P_{ij})$, $j=1,\dots,r+1$.  Such $f_i$'s can be computed through interpolation at any $r$ points of $\mathcal{P}_i$, so a recovering set for the coordinate corresponding to a point $P_{ij}$ is $R(i)=\mathcal{P}_i\setminus\{ P_{ij}\}$. Then $\mathcal{C}$ is a linear $([n,k],r)$ LRC optimal code.

The above method is extended to construct LRC AG codes as follows \cite{barg2,barg3}:  let $\mathcal{X},\mathcal{Y}$ be two algebraic  curves over $\mathbb{F}_{q}$ and let $\phi:\mathcal{X}\rightarrow \mathcal{Y}$ be a rational separable  morphism of degree $r+1$. Take a set $\mathcal{S}\subseteq \mathcal{Y} (\mathbb{F}_{q})$ of rational points with totally split fibres  and let  $\mathcal{P}=\phi^{-1}(\mathcal{S})$. Let $D$ be a rational divisor on $\mathcal{Y}$ with support disjoint  from $\mathcal{S}$ and  denote by
$\mathcal{L}(D)$ its associated Riemann-Roch space of dimension $m=\ell(D)$. 
By the separability of $\phi$ there exists  $y\in \mathbb{F}_{q}(\mathcal{X})$ satisfying $\mathbb{F}_{q}(\mathcal{X})=\mathbb{F}_{q}(\mathcal{Y})(y)$. Let 
$$
V=\left\{  \sum_{i=0}^{r-1} \; \sum_{j=1}^{m} a_{ij} f_j y^i  \; : \; a_{ij}\in\mathbb{F}_{q}    \right\}
$$
where $\{f_1,\dots,f_m\}$ is a basis of $\mathcal{L}(D)$. The LRC AG code $\mathcal{C}$ is defined as 
$\mathcal{C}=\mbox{ev}_{\mathcal{P}}(V)\subseteq \mathbb{F}_{q}^n$, with $n=\#\mathcal{P}$.
Note that $\mathcal{C}$ is a subcode of the algebraic geometry code $C(\mathcal{P},G)=\mbox{ev}_{\mathcal{P}}(\mathcal{L}(G))$, where $G$ is any divisor on $\mathcal{X}$ satisfying $V\subseteq \mathcal{L}(G)$. In particular $d(\mathcal{C})\ge d(C(\mathcal{P},G))\ge n-\deg(G)$.
Let $\mbox{ev}_{\mathcal{P}}(f)$, $f\in V$, be a codeword in $\mathcal{C}$.
Since the functions of $\mathcal{L} (D)$ are constant on each fibre  $\phi^{-1}(S)$, $S\in\mathcal{S}$,  the local recovery of an erased coordinate $f(P)$ of $\mbox{ev}_{\mathcal{P}}(f)$  can be performed by Lagrangian interpolation at the remaining $r$ coordinates of $\mbox{ev}_{\mathcal{P}}(f)$ corresponding to points in the fibre $\phi^{-1}(\phi(P))$ of $P$.  LRC AG codes with availability have been studied in \cite{matthews}. For a complete reference on algebraic geometry codes, see \cite{RM}.

\begin{theorem}\cite[Thm. 3.1]{barg3} 
If $\mbox{ev}_{\mathcal{P}}$ is injective on $V$ then $\mathcal{C}\subseteq \mathbb{F}_{q}^n$ is a linear $([n,k,d],r)$ LRC code with parameters 
$n=s(r+1)$, $k=r\ell(D)$ and $d\geq n-\deg(D)(r+1)-(r-1)\deg(y)$.
\end{theorem}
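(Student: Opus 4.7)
The plan is to verify the four assertions---length, locality, dimension, and distance---in turn, with the distance bound being the main technical point. The length comes for free: since $\phi$ has degree $r+1$ and every fibre $\phi^{-1}(S)$, $S\in\mathcal{S}$, splits completely into $r+1$ rational points of $\mathcal{X}$, we have $n=\#\mathcal{P}=(r+1)\#\mathcal{S}=(r+1)s$. Locality is then the Lagrangian interpolation argument already indicated in the text: for $f=\sum_{i,j}a_{ij}f_jy^i\in V$ and any $S\in\mathcal{S}$, each $f_j$ takes the constant value $f_j(S)$ on $\phi^{-1}(S)$, so $f|_{\phi^{-1}(S)}$ is the specialisation of the polynomial $\sum_{i=0}^{r-1}\bigl(\sum_j a_{ij}f_j(S)\bigr)y^i$ of degree $\le r-1$ in $y$. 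Since totally split fibres are unramified and $y$ is a primitive element of $\mathbb{F}_q(\mathcal{X})/\mathbb{F}_q(\mathcal{Y})$, the values $y(Q)$ as $Q$ ranges over $\phi^{-1}(S)$ are pairwise distinct, so any $r$ of them determine the polynomial and hence the missing coordinate.

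For the dimension, I would show that the generators $\{f_jy^i : 1\le j\le \ell(D),\ 0\le i\le r-1\}$ are $\mathbb{F}_q$-linearly independent. Because $\mathbb{F}_q(\mathcal{X})=\mathbb{F}_q(\mathcal{Y})(y)$ is of degree $r+1$ over $\mathbb{F}_q(\mathcal{Y})$, the powers $1,y,\ldots,y^{r-1}$ are linearly independent over $\mathbb{F}_q(\mathcal{Y})$. Hence any relation $\sum_{i,j}a_{ij}f_jy^i=0$ forces $\sum_j a_{ij}f_j=0$ for each $i$, and then all $a_{ij}=0$ because $\{f_j\}$ is a basis of $\mathcal{L}(D)$. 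Therefore $\dim V=r\ell(D)$, and the injectivity of $\mbox{ev}_{\mathcal{P}}$ on $V$ gives $k=\dim\mathcal{C}=r\ell(D)$.

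For the distance I would exhibit a standard AG code on $\mathcal{X}$ containing $\mathcal{C}$. Set $G=\phi^{\ast}D+(r-1)(y)_{\infty}$, where $(y)_\infty$ denotes the polar divisor of $y$. Each $f_j$ pulled back to $\mathcal{X}$ satisfies $(f_j)_{\mathcal{X}}\ge -\phi^{\ast}D$, and $y^i\in\mathcal{L}(i(y)_\infty)\subseteq\mathcal{L}((r-1)(y)_\infty)$ for $0\le i\le r-1$, so $V\subseteq\mathcal{L}(G)$. The support of $G$ avoids $\mathcal{P}$: $\mbox{supp}(\phi^{\ast}D)=\phi^{-1}(\mbox{supp}(D))$ is disjoint from $\phi^{-1}(\mathcal{S})=\mathcal{P}$ by hypothesis, while the poles of $y$ are necessarily disjoint from $\mathcal{P}$ for the evaluation to be defined. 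Consequently $\mathcal{C}\subseteq C(\mathcal{P},G)$ and the standard AG bound yields $d(\mathcal{C})\ge n-\deg(G)=n-(r+1)\deg(D)-(r-1)\deg(y)$, where I use $\deg(\phi^{\ast}D)=\deg(\phi)\deg(D)=(r+1)\deg(D)$ and $\deg((y)_\infty)=\deg(y)$ (viewing $y$ as a morphism $\mathcal{X}\to\mathbb{P}^1$). The main technical point is precisely this envelope construction and the pole-order bookkeeping under pullback; once $G$ is identified, the four parameters of $\mathcal{C}$ drop out directly, the only remaining subtlety being the separation of fibre points by $y$, which follows because a totally split fibre is unramified and $y$ generates the function field extension.
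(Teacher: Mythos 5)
Your overall route --- length from the totally split fibres, dimension from the linear independence of $1,y,\dots,y^{r-1}$ over $\mathbb{F}_{q}(\mathcal{Y})$ together with the injectivity of $\mbox{ev}_{\mathcal{P}}$, and the distance bound by exhibiting an enveloping divisor $G=\phi^{*}D+(r-1)(y)_{\infty}$ with $V\subseteq\mathcal{L}(G)$ and $\mbox{supp}(G)\cap\mathcal{P}=\emptyset$ --- is exactly the argument sketched in Section 2.1 of the paper (and your $G$ is consistent with the Maharaj-type divisor recalled in Remark 1). Those parts are correct, and the bookkeeping $\deg(\phi^{*}D)=(r+1)\deg(D)$, $\deg((y)_{\infty})=\deg(y)$ gives the stated bound.

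There is, however, one step whose justification is wrong: you assert that because a totally split fibre is unramified and $y$ is a primitive element of $\mathbb{F}_{q}(\mathcal{X})/\mathbb{F}_{q}(\mathcal{Y})$, the values $y(Q)$ for $Q\in\phi^{-1}(S)$ are pairwise distinct. This does not follow. For instance, take $\mathcal{X}:z^{2}=f(x)$ over $\mathbb{F}_{q}$ with $q$ odd and a point $a$ with $f(a)$ a nonzero square, so that the fibre over $x=a$ splits into the two rational points $(a,\pm b)$; setting $y=(x-a)z$ still gives a primitive element, yet $y$ vanishes at both points of that fibre, and Lagrangian interpolation in $y$ cannot recover an erasure there. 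Separation of the fibre points by $y$ is an extra hypothesis of the construction, not a consequence of separability plus splitting: it is guaranteed, e.g., by Kummer's theorem when $y$ generates the integral closure of the local ring at $S$, and in the present paper's general construction it is simply built into the definition of $r$ as $\min_{S\in\mathcal{S}^{*}}\#\{\phi_{t+1}(P):P\in\mathcal{P},\ \phi(P)=S\}-1$. You should either state this hypothesis explicitly or verify it for the particular $y$ being used; with that repaired, the rest of your proof stands.
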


\begin{example}\label{exampleTamo}
(Example 1 of \cite{barg1}). The polynomial $\phi(x)=x^3\in\mathbb{F}_{13}[x]$  is constant over the sets $\mathcal{P}_1=\{ 1,3,9\}$, $\mathcal{P}_2=\{ 2,6,5\}$ and $\mathcal{P}_3=\{4,10,12\}$. By the first construction we get optimal LRC codes of length $9$, locality $2$ and dimensions $k=2,4,6$.
The same codes can be obtained by the second construction from the curve $\mathcal{X}:x^3=y$ and the morphism $\phi=y:\mathcal{X}\rightarrow\mathbb{P}^1$.
\end{example}

\begin{example}\label{ExampleBarg}
(a)  Codes from the Hermitian curve $\mathcal{H}:x^{q+1}=y+y^q$ over $\mathbb{F}_{q^2}$ were studied  in \cite{barg3}.
Consider the morphism $\phi=x:\mathcal{H}\rightarrow\mathbb{P}^1$ of degree $q$.  Take $\mathcal{S}=\mathbb{A}(\mathbb{F}_{q^2})\subseteq \mathbb{P}^1$ and thus $\phi^{-1}(\mathcal{S})=\mathcal{H}(\mathbb{F}_{q^2})\backslash \{Q\}$, where $Q$ is the point at infinity of $\mathcal{H}$.  If  $D=l\infty$ then 
$V=\bigoplus_{i=0}^{r-1} \langle 1,x,x^2,\ldots,x^l\rangle y^i\subseteq \mathbb{F}_{q^2}(\mathcal{H})$.
We get codes of locality $r=q-1$ and parameters $n=q^3$, $k=r(l+1)$, $d\geq n-\deg(G)$, where 
$G=(-v_Q(x^ly^{r-1}))Q$ and $v_Q$ is the valuation  at $Q$.
\newline
(b) Codes from the Norm-Trace curve 
$x^{1+q+\dots+q^{u-1}}=y+y^q+\dots+y^{q^{u-1}}$ 
over $\mathbb{F}_{q^u}$ were studied by Ballico and Marcolla in \cite{ballico}. Following similar ideas  they found a family of LRC codes of locality $r=q^{u-1}-1$, length
$n=q^{2u-1}$ and dimension $k=(t+1)(q^{u-1}-1)$. 
\end{example}

\subsection{LRC codes  from rational maps}
\label{section2.2}

Let $\phi_1,\dots,\phi_t,\phi_{t+1}\in \mathbb{F}_{q}(x_1,\dots,x_m)$ be rational functions and let $\mathcal{A}\subseteq\mathbb{A}^m(\mathbb{F}_{q})$
be a subset in which none of these functions have poles. Then the map $\phi=(\phi_1,\dots,\phi_t):\mathcal{A}\subseteq\mathbb{A}^m(\mathbb{F}_{q})\rightarrow \mathbb{A}^t(\mathbb{F}_{q})$ is well defined. Take two sets  $\mathcal{S}\subseteq \mathbb{A}^t(\mathbb{F}_{q})$  and  $\mathcal{P}\subseteq\phi^{-1}(\mathcal{S})$ and define $r=\min_{S\in\mathcal{S}^*}\# \{ \phi_{t+1}(P) : P\in\mathcal{P}, \phi(P)=S\}-1$, where $\mathcal{S}^*=\{ S\in \mathcal{S} : \phi^{-1}(S)\neq \emptyset\}$. Thus, the function $\phi_{t+1}$ takes at least $r+1$ different values in the points of the fibre  $\phi^{-1}(S)$ for all $S\in\mathcal{S}^*$.  
If $r>0$, for  $i=0,\dots,r-1$,  consider a linear $\mathbb{F}_{q}$-space $V_i\subset \mathbb{F}_{q}[\phi_1,\dots,\phi_t]$,  and let
\begin{equation}\label{defV}
V=\bigoplus_{i=0}^{r-1} V_i\, \phi_{t+1}^i \subset  \mathbb{F}_{q}[\phi_1,\dots,\phi_{t+1}]  \subset\mathbb{F}_{q}(x_1,\dots,x_m).
\end{equation}
We define the code $\mathcal{C}=\mathcal{C}(\mathcal{P},V)$ as the image of the evaluation at $\mathcal{P}$ map $\mbox{ev}_{\mathcal{P}}: V\rightarrow \mathbb{F}_{q}^n$, where $n=\#\mathcal{P}$. Then $\mathcal{C}$ is a linear code of length $n$. If $\mbox{ev}_{\mathcal{P}}$ is injective then its dimension is $\dim(V)$. Furthermore $\mathcal{C}$ is an LRC code of locality $r$, where recovering is obtained through Lagrangian interpolation in each fibre of $\phi$: let $f\in V$ and suppose we want to recover an erasure at position $f(P)$. 
Let $S=\phi(P)$ and $\{  P_{1},\dots,P_{r+1}=P\}\subseteq \phi^{-1}(S)\cap \mathcal{P}$ a set in which  $\phi_{t+1}$ takes $r+1$ different values. Since all functions in each $V_i$ are constant in the fibre of $S$,  the restriction of $f$ to this fibre can be written as a polynomial 
$f_S=\sum_{j=0}^{r-1} a_j T^j $, that is $f(P_{j})=f_S(\phi_{t+1}(P_{j}))$ for all $j=1,\dots,\#\phi^{-1}(S)$. Since  $\phi_{t+1}$ takes at least $r+1$ different values in  $\phi^{-1}(S)$, the polynomial $f_S$ may be computed by Lagrangian interpolation from $ \phi_{t+1}(P_{1}),\dots,\phi_{t+1}(P_{r})$.

\begin{example} \label{exrational}
Consider the rational functions $\phi_1=x/y, \phi_2=(y-1)/z$ and the map $\phi=\phi_1:\mathcal{A}\subset\mathbb{A}^3(\mathbb{F}_3)\rightarrow \mathbb{A}^1(\mathbb{F}_3)$, where $\mathcal{A}=\{ (x,y,z)\in \mathbb{A}^1(\mathbb{F}_3) : yz\neq 0 \}$.  Let $\mathcal{S}=\mathbb{A}^1(\mathbb{F}_3)$. 
Then $\phi^{-1}(0)=\{ (0, 1, 1), (0, 1, 2), (0, 2, 1), (0, 2, 2) \}$, 
$\phi^{-1}(1)=\{ (1, 1, 1), (1, 1, 2), (2, 2, 1)$, $(2, 2, 2) \}$, 
$\phi^{-1}(2)=\{ (1, 2, 1), (1, 2, 2)$, $(2, 1, 1), (2, 1, 2)  \} $. Note that   
$ \phi_2(0, 1, 1)=\phi_2 (0, 1, 2)$; 
$\phi_2(1, 1, 1)=\phi_2 (1, 1, 2)$ and
$\phi_2(2, 1, 1)=\phi_2 (2, 1, 2)$.  We set 
$\mathcal{P}=\{ (0, 1, 2), (0, 2, 1)$, $(0, 2, 2); (1, 1, 2)$, $(2, 2, 1), (2, 2, 2); (1, 2, 1), (1, 2, 2), (2, 1, 2)\}$ and hence $r=2$. If we take the spaces of functions 
$$
V_1=\langle 1, \frac{x}{y} \rangle \oplus \langle 1\rangle \,\frac{y-1}{z}  \;  \mbox{ and } \;
V_2=\langle 1, \frac{x}{y},  \frac{x^2}{y^2}\rangle \oplus \langle 1, \frac{x}{y}\rangle \,\frac{y-1}{z} 
$$
we get a $([9,3,6],2)$ optimal LRC code $\mathcal{C}_1$ (from $V_1$) and a $([9,5,3],2)$ optimal code $\mathcal{C}_2$ (from $V_2$). 
Both  have length much larger than the cardinality of $\mathbb{F}_3$. In fact, the length and locality of these codes are the same as those of  Example \ref{exampleTamo}, although the cardinal of the ground field is considerably smaller. 
\end{example}

Defined in such generality, it is not possible to say too much about the parameters and behaviour of codes obtained by the aforementioned construction. In the subsequent sections we will discuss in more detail several particular classes of codes given by this method. As we shall see, in some of these cases the local recovery can be done by a simple checksum and does not require interpolation, which is simpler and faster.

\begin{example}
For both codes $\mathcal{C}_1$ and $\mathcal{C}_2$ of Example \ref{exrational}, a direct inspection shows that for any codeword $\mbox{ev}_{\mathcal{P}} (f)$, $f\in V_2$  (as defined in Example \ref{exrational}), the sum of their coordinates corresponding to each fibre of $\phi$ is zero. Thus a single erasure is corrected by one addition.
\end{example}

This construction also allows us to recover more than one erasure. Following the standard notation as in the Introduction, suppose we want to correct $\rho-1$ erasures in the coordinates corresponding to a fibre of $\phi$. For this, let $r=\min_{S\in\mathcal{S}^*}\# \{ \phi_{t+1}(P) : P\in\mathcal{P}, \phi(P)=S\}-\rho$, consider the space of functions $V$ as stated in equation (\ref{defV}), with respect to the new value of $r$, and define $\mathcal{C}=\mbox{ev}_{\mathcal{P}}(V)$.
The restriction of $f\in V$ to the points in a fibre of  $\phi$ is a polynomial of degree $r-1$. It can be computed from the information of any $r$ available coordinates among the  at least $r+\rho-1$ coordinates corresponding to points in this fibre for which $\phi_{t+1}$ takes different values.

\section{LRC codes from Algebraic Curves}
\label{section3}

\subsection{The construction for algebraic curves} 
\label{constructionAG}

Let $\mathcal{X}\subset \mathbb{P}^m(\mathbb{F}_{q})$ be a (algebraic, projective, absolutely irreducible) curve over $\mathbb{F}_{q}$. 
Take a rational point $Q\in \mathcal{X}(\mathbb{F}_{q})$. After a change of coordinates, if necessary, we can assume that $Q$ is a point at infinity.
The set $\mathcal{L}(\infty Q)=\cup_{s\ge 0}\mathcal{L}(sQ)$ is a finitely generated $\mathbb{F}_{q}$-algebra. Take rational functions $\phi_1,\dots,\phi_{t+1} \in \mathcal{L}(\infty Q)$ and
fix representatives of these functions (still denoted $\phi_1,\dots,\phi_t,\phi_{t+1}$ by abuse of notation).  Let $\mathcal{A}\subseteq\mathcal{X}(\mathbb{F}_{q})$ be the set of rational affine points on $\mathcal{X}$ and consider the map $\phi=(\phi_1,\dots,\phi_t):\mathcal{A}\subseteq\mathbb{A}^m(\mathbb{F}_{q})\rightarrow \mathbb{A}^t(\mathbb{F}_{q})$.
Let $\mathcal{S}\subseteq \mathbb{A}^t(\mathbb{F}_{q})$, $\mathcal{P}\subseteq\phi^{-1}(\mathcal{S})$ and $r$ as in the general construction.
For  $i=0,\dots,r-1$,  fix  linear $\mathbb{F}_{q}$-spaces $V_i\subset \mathbb{F}_{q}[\phi_1,\dots,\phi_t]$  and 
$$
V=\bigoplus_{i=0}^{r-1} V_i\, \phi_{t+1}^i \subset  \mathbb{F}_{q}[\phi_1,\dots,\phi_{t+1}] .
$$
Since $\phi_{t+1}\in\mathcal{L}(\infty Q)$, no point of $\mathcal{P}$  is a pole of $\phi_{t+1}$.
We define the code $\mathcal{C}=\mathcal{C}(\mathcal{P},V)$ as the image of the evaluation at $\mathcal{P}$ map $\mbox{ev}_{\mathcal{P}}: V\rightarrow \mathbb{F}_{q}^n$, where $n=\#\mathcal{P}$ is the length of $\mathcal{C}$. Note that the number of zeros of a function $f$ is related to the value $v_Q (f)$,  where $v_Q$ is the valuation at $Q$: the larger
this number, then the smaller the weight of $\mbox{ev}_{\mathcal{P}}(f)$ may be. This fact leads us to use the language of polytopes to define the space of functions $V$. For our purposes a polytope will be a set $\mathfrak{P}\subset\mathbb{N}_0^{t+1}$, where $\mathbb{N}_0$ stands for the set of nonnegative integers. Often we shall take $\mathfrak{P}$ as the convex hull of a set of points including  $(0,\dots,0)$ and $(0,\dots,0,r-1)$. 

Given a polytope $\mathfrak{P}$ we can consider the associated linear  $\mathbb{F}_{q}$-space $V(\mathfrak{P})\subseteq\mathbb{F}_{q}[\phi_1,\dots,\phi_t]$ spanned by the functions  $\{ \phi_1^{\alpha_1}\cdots\phi_{t+1}^{\alpha_{t+1}} \; :$ $ (\alpha_1,\dots,\alpha_{t+1})\in \mathfrak{P}\}$. In our case, 
for $i=1,\dots,t+1$, let $v_i=-v_Q(\phi_i)$. For $l\ge 1$ we consider the polytope in $\mathbb{N}_0^{t+1}$
$$
\mathfrak{P}=\mathfrak{P}(l)=\{  (\alpha_1,\dots,\alpha_{t+1})\in \mathbb{N}_0^{t+1} \; :  \; v_1\alpha_1+\cdots+v_{t+1}\alpha_{t+1}\le l, \; \alpha_{t+1}\le r-1  \}.
$$
If $V=V(\mathfrak{P}(l))$, then $V\subseteq\mathcal{L}(lQ)$ and hence $\mathcal{C}(\mathcal{P}, V)$ is a subcode of the usual AG code $C(\mathcal{P}, lQ)$. Then $\mathcal{C}(\mathcal{P}, V)$ is an LRC code of locality $r$. When $l<n=\#\mathcal{P}$ the evaluation map $\mbox{ev}_{\mathcal{P}}:\mathcal{L}(lQ)\rightarrow\mathbb{F}_{q}^n$ is injective, then so is
$\mbox{ev}_{\mathcal{P}}:V\rightarrow\mathbb{F}_{q}^n$,  and the dimension of $\mathcal{C}(\mathcal{P}, V)$ is $\# \mathfrak{P}$. The minimum distance of $\mathcal{C}(\mathcal{P}, V)$ is at least the minimum distance of  $C(\mathcal{P}, lQ)$, that is $n-l$.

\begin{remark}
Our construction of LRC codes from curves extends that given in  \cite{barg2,barg3}. The main difference is in the set $V$ of functions to be evaluated.  In the language of \cite{barg2,barg3}, rather than considering a unique divisor $D$, we may consider $r$ rational divisors $D_0,\dots,D_{r-1}$ on $\mathcal{Y}$,  the space of functions 
$V=\bigoplus_{i=0}^{r-1}\mathcal{L}(D_i)y^i$
and the code  $\mbox{ev}_{\mathcal{P}}(V)\subseteq \mathbb{F}_{q}^n$.  This change does  not affect the locality nor the recovering method. Using a sequence of divisors $D_0,\dots,D_{r-1}$, instead of a single divisor $D$ provides greater flexibility to the construction. Typically one can take $D_0\ge \dots\ge D_{r-1}$. Since codewords of smaller weight come from evaluation of functions $f\in\mathcal{L}(D_i)y^i$ with larger $i$, this strategy allows to increase the dimension of $\mathcal{C}$ without affecting the estimate on the minimum distance.   In Subsection \ref{Hermitian} we shall show several examples of this fact.
Recall that  codes of this type were already considered by Maharaj in \cite{maharaj} using the language of function fields. In that paper, it is shown that the smallest divisor $G$ satisfying $V\subseteq\mathcal{L}(G)$  is described as 
$G=\max \{(\phi^*(D_i)-\mbox{div}(y^i) : 0\leq i \leq r-1\}$,
where  $\phi^*:\mbox{Div}(\mathcal{Y})\to \mbox{Div}(\mathcal{X})$ is the pull-back map induced by $\phi$.
\end{remark}

In the rest of this section we present examples of LRC codes obtained from some types of curves which are well known in coding theory.

\subsection{LRC codes from the Klein quartic} 
\label{klein}

The {\em Klein quartic} is the curve $\mathcal{K}$ defined over $\mathbb{F}_8$  by the affine equation
$x^3y+y^3+x=0$. Codes from this curve have been extensively studied,  \cite{Klein}.
It has 24 rational points, being two of them at infinity, $R=(0:1:0)$ and $Q=(1:0:0)$. The Weierstrass semigroups at $R$ and $Q$ are both equal to $\langle 3,5,7\rangle$. The rational functions $\phi_1=x/y$ and $\phi_2=x/y^2$ have poles only at $Q$, with multiplicity 3 and 5 respectively.

\begin{example}
Let $\mathcal{A}$ be the set of affine rational points of $\mathcal{K}$ and consider the rational function $\phi_1: \mathcal{A} \to \mathbb{A}^1(\mathbb{F}_8)$.
The only ramified point of $\phi_1$ is $(0,0)$, hence $\#\phi_1^{-1}(S)=3$ for all $S\in \mathbb{A}^1(\mathbb{F}_8)$, $S\neq 0$. Take $\mathcal{S}= \mathbb{A}^1(\mathbb{F}_8)\setminus \{0 \}$ and $\mathcal{P}=\phi_1^{-1}(\mathcal{S})=\mathcal{A}\setminus \{(0,0)\}$. Since $\phi_2$  takes 3 distinct values in all fibres $\phi_1^{-1}(S)$, $S\in\mathcal{S}$, we have $r=2$.  The polytope
$
\mathfrak{P}=\mathfrak{P}(6)=\{  (\alpha,\beta)\in \mathbb{N}_0^{2} \; :  \; 3\alpha+5\beta \le 6, \; \beta\le 1  \}
$
leads to the space of functions $V=\langle 1,\phi_1,\phi_1^2\rangle \oplus \langle 1 \rangle\phi_2$
which, through evaluation at $\mathcal{P}$, produces a $([21,4,15],2)$ LRC code of optimal defect 2. Let us note that according to the Griesmer bound, the maximum possible minimum distance $d$ for a linear $[21,4]_8$ code is $d=16$, and therefore optimal $([21,4]_8,2)$ LRC codes cannot exist.
 
In the same way, by considering polytopes  $\mathfrak{P}(l)$, $1\le l\le 20$, we obtain LRC  $[n,k]$ codes, $1\le k\le 13$, of locality 2. Those of dimension $1,11,13$ are optimal.  For the rest, those  of odd dimension have defect 1 and those of even dimension have defect 2. The true minimum distances of these codes have been computed with the computer system Magma \cite{magma}. 
In the non-optimal cases, further modifications of the above polytopes can lead to codes with better parameters. For example, if we remove $(6,0)$ from $\mathfrak{P}(20)$, we get a $[21,12,4] $  almost-optimal code. The same happens for the other values of $l$.
\end{example}

\subsection{LRC codes from elliptic curves} 
\label{elliptic}

Let $\mathcal{X}$ be an elliptic curve over $\mathbb{F}_{q}$. If we assume $\mbox{char}(\mathbb{F}_{q})=p\neq 2,3$, then $\mathcal{X}$ has a plane model given by an affine equation of the form $y^2=x^3+ax+b$ and exactly one point at infinity, $Q$. Although the above theory may be applied to any elliptic curve, in this section, by way of example, we shall restrict ourselves to those of equations $y^2=x^3+B$. 
If $q\equiv 3$ (mod 3) then all of them have $q+1$ rational points \cite{Rendiconti}, and no interest to construct LRC codes by our method.

Assume $q\equiv 1$ (mod 3) and let $\xi$ be a primitive element of $\mathbb{F}_{q}$. Since $\omega=\xi^{(q-1)/3}$ is a cubic root of unity, for each of these curves $\mathcal{X}$, if $(a,b)\in\mathcal{X}(\mathbb{F}_{q})$ then also  $(\omega a,\pm b)\in\mathcal{X}(\mathbb{F}_{q})$. Thus 
the set $\mathcal{A}$ of rational affine points $(a,b)$ of $\mathcal{X}$ with $a\neq 0$ can be grouped in triples, $\{ (a,b),(\omega a,b),(\omega^2 a,b)\}$,  and $a=0$  happens at most for two points. We can obtain LRC codes of locality $r=2$ from $\mathcal{X}$. To this end, let  $\mathcal{A}$ be the set of points belonging to these triples,  $\phi=y:\mathcal{A}\subset\mathbb{A}^2\rightarrow\mathbb{A}^1$, $\mathcal{S}=\mathbb{A}^1$ and  $\mathcal{P}= \mathcal{A} \subseteq \phi^{-1}(\mathbb{A}^1)$. Let $n=\# \mathcal{P}\ge \#\mathcal{X}(\mathbb{F}_{q})-3$. Since $-v_Q(x)=2, -v_Q(y)=3$, for $l\ge 1$ we consider the polytope
$\mathfrak{P}(l)=\{ (\alpha,\beta)\in\mathbb{N}_0^2 \; : \; 2\alpha+3\beta\le l, \alpha\le 1\}$
of cardinality $\# \mathfrak{P}(l)=l-\lfloor(l-1)/3\rfloor$ and the linear space of functions 
$V(l)=\langle \{x^{\alpha} y^{\beta} \; : \; (\alpha,\beta)\in \mathfrak{P}(l)\}\rangle \subset \mathcal{L}(lQ)$.
Let $\mathcal{C}=\mathcal{C}(\mathcal{P},V(l))$ obtained by evaluation at $\mathcal{P}$ of all functions in $V(l)$. Since $\mathcal{C}(\mathcal{P},V(l))$ $\subseteq C(\mathcal{P},lQ)$,  if $l<n$ then $\mbox{ev}_{\mathcal{P}}$ is injective, so the dimension of  $\mathcal{C}$ is $\# \mathfrak{P}(l)$ and its minimum distance is $n-l$. A straightforward computation shows that having these parameters,  $\mathcal{C}$ is optimal when $l\equiv 0 \, \mbox{(mod $3$)}$ and almost-optimal   
when $l\equiv 2 \,\mbox{(mod $3$)}$.
 
In order to determine the length of these codes, we shall study the number of rational points on our curves.  
There exist six isomorphism classes of such curves, namely $\mathcal{X}_i:y^2=x^3+\xi^i$, $i=0,\dots,5$.  In \cite{Rendiconti} a fast algorithm to determine the cardinality of all of them can be found. However, we do not use this algorithm here and we will limit ourselves to using some elementary facts established in that article. Write $S_i=\# \mathcal{X}_i-q-1$.  Since $\mathcal{X}_0 \cong \mathcal{X}_3$,  $\mathcal{X}_0 \cong \mathcal{X}_3$,  $\mathcal{X}_0 \cong \mathcal{X}_3$ over $\mathbb{F}_{q^2}$, 
$\mathcal{X}_0 \cong \mathcal{X}_2\cong \mathcal{X}_4$, $\mathcal{X}_1 \cong \mathcal{X}_3\cong \mathcal{X}_5$  over $\mathbb{F}_{q^3}$, and all of them are isomorphic over $\mathbb{F}_{q^6}$, we deduce that
$S_0=-S_3,  S_1=-S_4,  S_2=-S_5$ and $S_0^2+S_1^2+S_2^2=6q$. 
Then any of these six curves $\mathcal{X}_i$ has at least $q+1+\sqrt{2q}$ points. If $q$ is an even power, then  any of these six curves $\mathcal{X}_i$ has $q+1+2\sqrt{q}$ points. Even in the case that $q$ is an odd power, numerical experiments show that in most cases  any of these six curves has either $q+1+2\sqrt{q}$ or $q+2\sqrt{q}$ points. In conclusion, when $q\equiv 1 \, \mbox{(mod $3$)}$ we can always find optimal LRC codes of locality $r=2$ and length $n\ge q+\sqrt{2q}-2$.

\begin{example}
Consider the field $\mathbb{F}_{13}$ and the primitive element $\xi=2$. The curve $\mathcal{X}_2:y^2=x^3+4$ has 20 affine points (the maximum possible), namely $(7,\pm 3), (8,\pm 3), (11,\pm 3)$;  $(4,\pm 4), (10,\pm 4), (12,\pm 4)$;  $(2,\pm 5), (5,\pm 5), (6,\pm 5)$ and $(0,\pm 2)$. We obtain LRC Elliptic codes of length $n=18$, locality $r=2$ and dimensions $k=1,\dots,12$. Those of odd dimension are optimal whereas those of even dimension are  almost-optimal. Remark that the length of these codes is twice the length of those of Example \ref{exampleTamo}.  
\end{example}

\subsection{LRC codes from Hermitian curves}
\label{Hermitian}

Among all curves used to obtain AG codes, Hermitian curves $\mathcal{H}:x^{q+1}=y^{q}+y$ over $\mathbb{F}_{q^2}$, are by far the most studied and best known. In this subsection we include some examples of LRC Hermitian codes. Since such codes have been treated also in \cite{barg2, barg3}, these examples will allow us to show how our construction can provide codes with better parameters than those obtained from the original construction of \cite{barg2, barg3}. In subsequent subsections we will study further properties of Hermitian codes, including the recovery process.  

\begin{example}\label{improvementk}
Consider the Hermitian code $\mathcal{C}$ treated in \cite[Section IV]{barg3} and Example \ref{ExampleBarg}(a). It is constructed over $\mathbb{F}_{q^2}$ as the evaluation code  $\mathcal{C}=\mbox{ev}_{\mathcal{P}}(V)$ from the Hermitian curve $\mathcal{H}$, the set $\mathcal{P}=\mathcal{H}(\mathbb{F}_{q^2})\setminus \{Q\}$, where $Q$ is the only point at infinity, and the space of functions $V=\bigoplus_{i=0}^{r-1} \langle 1,x,x^2,\ldots,x^l\rangle y^i$.   Then $r=q-1$.
$\mathcal{C}$ can be obtained also  from the rectangular polytope
$\mathfrak{P}=\{ (\alpha,\beta)\in\mathbb{N}_0^2 \; : \; \alpha\le l, \beta\le r-1\}$.
In order to compare this code with some others obtained through our construction  of Section \ref{constructionAG}, we
note that $-v_Q(x)=q, -v_Q(y)=q+1$ and so $\max\{ -v_Q(f) \; : \; f\in V\}=lq+(r-1)(q+1)$. This fact suggests to consider the polytope
$\mathfrak{P}'=\{ (\alpha,\beta)\in\mathbb{N}_0^2 \; : \; q\alpha+(q+1)\beta\le lq+(r-1)(q+1), \beta\le r-1\}$ which
leads to the space of functions
$V'=\bigoplus_{i=0}^{r-1} \langle 1,x,x^2,\ldots,x^{l_i}\rangle y^i$
with $l_i=l+r-i-1$, $i=0,\dots,r-1$, and the code $\mathcal{C}'=\mbox{ev}_{\mathcal{P}}(V')$. Note that $V,V'\subseteq \mathcal{L}((lq+(r-1)(q+1))Q)$, hence whenever $lq+(r-1)(q+1)<n=q^3$, both maps $\mbox{ev}_{\mathcal{P}}:V\rightarrow \mathbb{F}_{q}^n$ and $\mbox{ev}_{\mathcal{P}}:V'\rightarrow \mathbb{F}_{q}^n$ are injective. So $\dim (\mathcal{C})=(l+1)r$ and 
$\dim (\mathcal{C}')=\# \mathfrak{P}=\dim (\mathcal{C}) + ((r-1)+\dots+0)$. Thus there is a gain of $(r-1)+\dots+1=(q-1)(q-2)/2$ units in the dimension of $\mathcal{C}'$ without affecting the estimate on the minimum distance. For example, if $q=5$ then for any value of $l\le 21$, $\mbox{ev}_{\mathcal{P}}$ is injective, the dimension increases by 6 units and the optimal defect decreases by 8 units. 
\end{example}

\begin{example}\label{improvementd}
Consider again the Hermitian code $\mathcal{C}$ of Example \ref{ExampleBarg}(a).
Let us take here the opposite way to the previous Example \ref{improvementk} and  construct LRC Hermitian codes with improved minimum distance and at least the same dimension. Let $V$ and $\mathcal{C}$ as in Example \ref{improvementk}. Let $l'=l-\lfloor r(r-1)/2q\rfloor$. Consider now the space of functions 
$V'=\bigoplus_{i=0}^{r-1} \langle 1,x,x^2,\ldots,x^{l'_i}\rangle y^i$ 
with $l'_i=l'+r-i-1$, $i=0,\dots,r-1$, and the code  $\mathcal{C}'=\mbox{ev}_{\mathcal{P}}(V')$. Note that $l'=l-\lfloor r(r-1)/2q\rfloor$ implies $q(l'+1)+r(r-1)/2\ge q(l+1)$ so, according to the considerations made in Example \ref{improvementk}, $\dim(\mathcal{C}')\ge \dim(\mathcal{C})$. Furthermore, since $V\subseteq \mathcal{L}((lq+(r-1)(q+1))Q)$, $V'\subseteq \mathcal{L}((l'q+(r-1)(q+1))Q)$, the estimate on the minimum distance of $\mathcal{C}'$ is $(l-l')q\approx r(r-1)/2$ units larger than the estimate on the minimum distance of $\mathcal{C}$.
For example, if $q=5$ then $\dim(\mathcal{C})=5l+5$, $\dim(\mathcal{C}')=5l+6$ and the estimate on the minimum distance of $\mathcal{C}'$ is $5$ units larger than the estimate on the minimum distance of $\mathcal{C}$.
Numerical experiments using Magma to compute the true minimum distances of these codes show that this is  the case for most values of $l$. 
\end{example}

\subsection{AG Hermitian codes as LRC codes}

Let $\mathcal{C}=\mbox{ev}_{\mathcal{P}}(V)$ be an LRC code obtained from a curve $\mathcal{X}$, a point $Q$ and the sets $\mathcal{P}$ and $V$.  Our estimate on the minimum distance of $\mathcal{C}$ comes from the Goppa bound on the minimum distance of the smallest AG code $C(\mathcal{P},G)$  containing $\mathcal{C}$, which is probably much bigger than $\mathcal{C}$. Furthermore, no efficient decoding algorithms for  $\mathcal{C}$ are currently available, so that we are compelled to decode it as a subcode of $C(\mathcal{P},G)$,  bypassing  its true error-correcting capability. Consequently it is natural to ask when $\mathcal{C}$ is an AG code itself, that is when $V=\mathcal{L}(G)$. We cannot expect this to happen in general, but it is possible when the linear space $V$ is suitably chosen. In this subsection we discuss the case of Hermitian codes and the rational map $\phi=x$.
Note that a function $f\in\mathcal{L}(\infty Q)\subset\mathbb{F}_{q^2}(\mathcal{H})$ can be uniquely written as a polynomial $f\in\mathbb{F}_{q^2}[x][y]$ with $\deg_y(f)<q$. Such $f$ may appear in the set $V$ of an LRC code arising from $\mathcal{H}$  if $\deg_y(f)<q-1$.

\begin{proposition}
Let $\mathcal{H}$ be the Hermitian curve over $\mathbb{F}_{q^2}$. The space of functions
$
V=\bigoplus_{i=0}^{t} \langle 1,x,x^2,\ldots,x^{l_i}\rangle y^i\subseteq \mathbb{F}_{q^2}(\mathcal{H})
$,
with $t\le q-2$, is a Riemann-Roch space $\mathcal{L}(G)$ if and only if there exists $j$, $0\le j\le t+1$, such that 
$l_{t-i}=i$ if $i<j$ and  $l_{t-i}=i+1$ if $i\ge j$. In this case $G=sQ$, where $Q$ is the point at infinity of $\mathcal{H}$ and $s=(t+1)q+t-j$  if $j\le t$ or $s=t(q+1)$ if $j=t+1$.
\end{proposition}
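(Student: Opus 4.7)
The plan is to match $V$ against the explicit monomial basis of Riemann-Roch spaces at $Q$. Because $\mathcal{H}$ is given by $y^q+y=x^{q+1}$, each element of $\mathcal{L}(\infty Q)$ is uniquely of the form $\sum_{b=0}^{q-1} f_b(x)\,y^b$, and $-v_Q(x^a y^b)=aq+b(q+1)$. Hence for every $s\ge 0$,
\begin{equation*}
\mathcal{L}(sQ)\;=\;\bigoplus_{b=0}^{q-1}\bigl\langle 1,x,\ldots,x^{L_b(s)}\bigr\rangle\, y^b,\qquad L_b(s):=\lfloor(s-b(q+1))/q\rfloor,
\end{equation*}
with the convention that the $b$-th summand is $\{0\}$ when $L_b(s)<0$. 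The whole proof is a matching of this decomposition with the given one.

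\emph{Forward direction.} Suppose $V=\mathcal{L}(G)$. Since $V$ contains the constant $1$ and every one of its elements has poles only at $Q$, one may replace $G$ by $sQ$, where $s$ is the maximum pole order at $Q$ of an element of $V$. The $y$-degree of $V$ being exactly $t$ forces $t(q+1)\le s<(t+1)(q+1)$, so write $s=t(q+1)+m$ with $0\le m\le q$. After the index change $k=t-i$,
\begin{equation*}
l_{t-k}\;=\;L_{t-k}(s)\;=\;k+\bigl\lfloor(k+m)/q\bigr\rfloor .
\end{equation*}
The hypothesis $t\le q-2$ confines $k+m$ to $[0,2q-2]$, so the floor is $0$ or $1$, and equals $1$ precisely when $k\ge q-m$. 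Setting $j:=\min(q-m,\,t+1)$ yields the piecewise formula of the statement, and the corresponding values are $s=(t+1)q+t-j$ when $j\le t$ and $s=t(q+1)$ when $j=t+1$.

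\emph{Converse.} Given $(l_i)$ of the stated form for some $j$ and $s$ as in the proposition, a direct computation of $L_i(s)$ for $0\le i\le t$ (same index change, same floor analysis) shows $l_i=L_i(s)$; moreover $L_b(s)<0$ for $b\ge t+1$. Hence $V$ and $\mathcal{L}(sQ)$ have the same monomial basis and so coincide.

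\emph{Main obstacle.} The only non-routine step is the asymmetric case $j=t+1$, which collapses several values of $m$ into one: every integer strictly between $t(q+1)$ and $(t+1)q$ is a Weierstrass gap at $Q$. Indeed, a representation $aq+b(q+1)=tq+t+c$ with $1\le c\le q-t-1$ forces $b\equiv t+c\pmod q$, hence $b=t+c<q$, yielding $a=-c<0$. Therefore the $q-t$ values $s\in[t(q+1),(t+1)q-1]$ all produce the same space $\mathcal{L}(t(q+1)Q)$, which is why the proposition groups them under the single index $j=t+1$ and selects the canonical representative $s=t(q+1)$.
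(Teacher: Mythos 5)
Your proof is correct, but it is organized differently from the paper's. The paper never writes down an explicit basis of $\mathcal{L}(sQ)$; it takes $s$ to be the largest pole order occurring in $V$, and in the forward direction derives two combinatorial constraints, $l_{t-i}\le i+1$ and $l_{i-1}\ge l_i+1$, by contradiction (e.g.\ $l_{t-i}\ge i+2$ would force $y^{t+1}\in\mathcal{L}(sQ)=V$, impossible since $t\le q-2$); the piecewise shape follows from these, and the converse is settled by a dimension count, $V\subseteq\mathcal{L}(sQ)$ together with $\dim V=\ell(sQ)$. You instead parametrize every candidate space at once via $\mathcal{L}(sQ)=\bigoplus_b\langle 1,\dots,x^{L_b(s)}\rangle y^b$ with $L_b(s)=\lfloor(s-b(q+1))/q\rfloor$ and read the answer off the floor function. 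Your route is more computational but buys two things the paper leaves implicit: it produces $j$ constructively as $\min(q-m,t+1)$ rather than as a bare existence statement, and it forces the observation that the integers strictly between $t(q+1)$ and $(t+1)q$ are Weierstrass gaps at $Q$, which is precisely why several values of $s$ collapse into the single case $j=t+1$ — a point the paper sidesteps by always normalizing $s$ to the maximal pole order of $V$. Both arguments ultimately rest on the same fact (the monomial basis of $\mathcal{L}(\infty Q)$ with $\deg_y<q$ and $-v_Q(x^ay^b)=aq+b(q+1)$), and your one-line reduction from a general divisor $G$ to $sQ$ is at the same level of rigor as the paper's appeal to ``the smallest Riemann--Roch space containing $V$.''
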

\begin{proof}
The smallest Riemann-Roch space containing $V$ is $\mathcal{L}( sQ)$ with
$s=\max\{ -v_Q(x^{l_i}y^i) : i=0,\dots,t\}=\max\{ l_iq+i(q+1) :  i=0,\dots,t\}$.
Let us first assume $V=\mathcal{L}(sQ)$. Then for $i=0,\dots,t$, we have $l_{t-i}\le i+1$, since otherwise 
$-v_q(x^{l_{t-i}}y^{t-i})>-v_Q(x^{i+2}y^{t-i})>-v_Q(y^{t+1})$ so $y^{t+1}\in V$. In particular $l_t=0$ or $l_t=1$. A similar argument proves that for all $i=1,\dots,t$, we have $l_{i-1}\ge l_{i}+1$. From these two conditions we deduce that  there exists $j$, $0\le j\le t+1$, such that 
$l_{t-i}=i$ if $i<j$ and  $l_{t-i}=i+1$ if $i\ge j$. Conversely, it is simple to check that under these conditions we have $s=-v_Q(x^{l_{t-j}}y^{t-j})=(j+1)q+(t-j)(q+1)=(t+1)q+t-j$  if $j\le t$ and $s=-v_Q(y^{t})=t(q+1)$ if $j=t+1$. In either case it holds that  $\dim(V)=\ell(sQ)$, hence we get equality $V=\mathcal{L}(sQ)$.
\end{proof}

\begin{example}
Consider the  LRC codes obtained from the Hermitian curve with $q=4$ and the map $\phi=x$. For short let us restrict to those codes correcting one erasure per fibre ($t=2$). There are exactly four among them which are AG codes, namely the evaluation of the following spaces 
\begin{gather*}
V=\langle 1,x,x^2 \rangle  \oplus
\langle 1,x \rangle y \oplus
\langle 1 \rangle y^2
= \mathcal{L}(-v_Q(y^2)Q)=\mathcal{L}(10 Q) \\ 
V=\langle 1,x,x^2,x^3 \rangle  \oplus
\langle 1,x \rangle y \oplus
\langle 1 \rangle y^2
= \mathcal{L}(-v_Q(x^3)Q)=\mathcal{L}(12 Q) \\
V=\langle 1,x,x^2,x^3 \rangle  \oplus
\langle 1,x,x^2 \rangle y \oplus
\langle 1 \rangle y^2 
= \mathcal{L}(-v_Q(x^2y)Q)=\mathcal{L}(13 Q) \\
V=\langle 1,x,x^2,x^3 \rangle  \oplus
\langle 1,x,x^2 \rangle y \oplus
\langle 1,x \rangle y^2 
= \mathcal{L}(-v_Q(xy^2)Q)=\mathcal{L}(14 Q) .
\end{gather*}
\end{example}

\subsection{A simplified recovering method for LRC Hermitian and related codes}

In order to motivate this subsection, we begin with a simple example.

\begin{example}\label{checksum}
Let us consider the LRC Hermitian code $\mathcal{C}=\mbox{ev}_{\mathcal{P}}(V)$ constructed over $\mathbb{F}_{9}$ from the Hermitian curve $\mathcal{H}$, the map $\phi=x$, the set $\mathcal{P}=\mathcal{H}(\mathbb{F}_{9})\setminus \{Q\}$ where $Q$ is the only point at infinity of $\mathcal{H}$, and the space of functions
$V=\bigoplus_{i=0}^{r-1} \langle 1,x,x^2,\ldots,x^{l_i}\rangle y^i$, where $r=q-1=2$.  Each fibre $\phi^{-1}(a)$ is of type $\{(a,b_1),(a,b_2),(a,b_3)\}$ with $a^4=b^3_i+b_i$ for $i=1,2,3$. Since a function $f\in V$ can be written as $f=g_0(x)+g_1(x)y$, we have $f(a,b_i)=g_0(a)+g_1(a)b_i$ and thus
$$
f(a,b_1)+f(a,b_2)+f(a,b_3)=3g_0(a)+g_1(a)(b_1+b_2+b_3)=g_1(a)(b_1+b_2+b_3)=0
$$
because $b_1+b_2+b_3=0$ as it is the sum of the roots of the polynomial $y^3+y-a^4$ (the coefficient of $y^2$). Then the coordinate $f(a,b_1)$ of the codeword $\mbox{ev}_{\mathcal{P}}(f)$ can be recovered as $f(a,b_1)=-f(a,b_2)-f(a,b_3)$.   
\end{example}

In this subsection we shall show a family of LRC codes from curves with this property of recovering by checksum.
We first recall some identities regarding roots of univariate polynomials. 
Consider the elementary symmetric polynomials over a domain $A$, 
$
\sigma_1(z_1,\dots,z_d) = z_1+\cdots+z_d, 
\sigma_2(z_1,\dots,z_d)=z_1z_2+\cdots+z_{d-1}z_d, 
\dots,
\sigma_d(z_1,...,z_d) =  z_1\cdots z_d
$.
Write $\sigma_i=\sigma_i(z_1,...,z_d)$.
For an arbitrary monic polynomial $t(x)=x^d+t_{d-1}x^{d-1}+\cdots+t_0\in A[x]$, the elementary symmetric polynomials on the roots  $z_1,...,z_d$ of $t(x)$ are related to its coefficients by the  \emph{Vieta's formulae} \cite{Vieta}:
$\sigma_i=(-1)^i t_{d-i}$, $i=1,...,d$.
Consider now for $i\geq 1$ the multivariate polynomials
$
\pi_i=\pi_i(z_1,...,z_d)=z_1^i+\cdots+z_d^i\in A[z_1,\ldots,z_d],
$
which are related to the elementary symmetric polynomials by the \emph{Newton-Girard relations} \cite{Vieta}: we have $\pi_1=\sigma_1$ and  for each integer $i>1$, 
$$
\pi_i  = (-1)^{i-1}i\sigma_i  -\sum_{j=1}^{i-1} (-1)^j\pi_{i-j}\sigma_j .
$$
Let $p$ be the characteristic of $\mathbb{F}_{q}$. Recall that
a polynomial $\ell(x)\in \mathbb{F}_{q} [x]$ is {\em linearized} if the exponents of all its nonzero monomials are powers of $p$. In that case, any polynomial of the form $\ell(x)-\alpha$, $\alpha\in\mathbb{F}_{q}$, will be called {\em affine $p$-polynomial}.

\begin{lemma} \label{lemma} If $z_1,...,z_d$ are the roots of an affine $p$-polynomial of degree $d$ over $\mathbb{F}_{q}$, then $\pi_i(z_1,...,z_d)=0$ for all $i=1,\ldots,d-2$.
\end{lemma}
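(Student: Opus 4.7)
The plan is to combine the Newton–Girard recursion with Vieta's formulas to reduce the claim to an elementary divisibility check in characteristic $p$, and then exploit the very specific shape of the support of an affine $p$-polynomial.

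First I would normalize: an affine $p$-polynomial of degree $d$ must have $d=p^{k}$ for some $k\ge 0$, and after dividing by the leading coefficient we may write it in monic form
$$
t(x)=x^{d}+c_{k-1}x^{p^{k-1}}+\cdots+c_{1}x^{p}+c_{0}x-\alpha,
$$
so that $t_{j}$ is nonzero only for $j\in\{0,1,p,p^{2},\dots,p^{k-1}\}$. By Vieta's formulas $\sigma_{i}=(-1)^{i}t_{d-i}$, so $\sigma_{i}$ can be nonzero only when $d-i\in\{0,1,p,p^{2},\dots,p^{k-1}\}$.

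Next I would proceed by strong induction on $i$, with the induction hypothesis that $\pi_{1}=\cdots=\pi_{i-1}=0$. Under this hypothesis every term in the sum of the Newton–Girard relation drops out, leaving
$$
\pi_{i}=(-1)^{i-1}\,i\,\sigma_{i}.
$$
It therefore suffices to show that $i\,\sigma_{i}=0$ in $\mathbb{F}_{q}$ for all $1\le i\le d-2$. If $d-i$ does not belong to the exponent set above, then $\sigma_{i}=0$ and we are done. Otherwise $d-i\in\{0,1,p,\dots,p^{k-1}\}$, and the range $1\le i\le d-2$ forces $2\le d-i\le d-1$. This rules out $d-i=0$ (which would give $i=d$) and $d-i=1$ (which would give $i=d-1$); so necessarily $d-i=p^{j}$ for some $1\le j\le k-1$. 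But then
$$
i=p^{k}-p^{j}=p^{j}\bigl(p^{k-j}-1\bigr),
$$
which is divisible by $p$, and hence $i\,\sigma_{i}=0$ in characteristic $p$. This closes the induction and yields $\pi_{i}=0$ for $1\le i\le d-2$.

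The only step that requires any care is the edge-case bookkeeping in the induction step: the restriction $i\le d-2$ is there precisely to exclude the value $i=d-1$, which corresponds to $d-i=1$ and would otherwise produce a possibly nonzero contribution $\sigma_{d-1}\ne 0$ with $i=p^{k}-1$ \emph{not} divisible by $p$. Everything else is a routine consequence of Newton–Girard and the sparseness of the support of a linearized polynomial.
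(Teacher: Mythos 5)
Your proof is correct and follows essentially the same route as the paper's: induction via the Newton--Girard recursion to reduce to $\pi_i=\pm i\sigma_i$, then Vieta's formulas plus the sparse support of an affine $p$-polynomial to conclude $p\mid i$ whenever $\sigma_i\neq 0$. Your explicit handling of the edge cases $d-i\in\{0,1\}$ (which is exactly why the range stops at $i=d-2$) is a welcome clarification of a point the paper leaves implicit.
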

\begin{proof}
By induction.
Clearly $\pi_1=\sigma_1=0$. Assume $\pi_1=\dots=\pi_{i-1}=0$ for $i\le d-3$. Then $\pi_i=\pm i\sigma_i$. If $\sigma_i=0$ we have $\pi_i=0$. 
If $\sigma_i\neq 0$ then $d-i=p^h-i$ is a power of $p$. Thus $i\equiv 0 \mbox{ (mod $p$)}$ and again $\pi_i=0$. Therefore $\pi_1=\ldots=\pi_{d-2}=0$.
\end{proof}

Next we extend the idea of Example \ref{checksum} to a wide class of curves providing families of LRC codes allowing local recovery by a checksum.  The codes we propose come from the family of Artin-Schreier curves and thus they include LRC Hermitian codes. More precisely let $\mathcal{X}$ be an algebraic smooth curve defined over $\mathbb{F}_{q}$ by an equation of separated variables 
$u(x)=v(y)$, where $u$ and $v$ are univariate polynomials over $\mathbb{F}_{q}$ of coprime degrees and  $v$ is a separable linearized polynomial of degree $m=p^h$, with $p=\mbox{char}(\mathbb{F}_{q})$. 

$\mathcal{X}$ has exactly one point at infinity $Q$ which is the common pole of $x$ and $y$. The Weierstrass semigroup of $\mathcal{X}$ at $Q$ is generated by $-v_Q(x)=\deg(v)$ and $-v_Q(y)=\deg(u)$. Consider the rational map
$\phi=x:\mathcal{X}(\mathbb{F}_{q})\setminus\{Q\}\to \mathbb{A}^1$ and let $\mathcal{S}=\{ a\in \mathbb{F}_{q} \ : \ \mbox{there exists $b\in \mathbb{F}_{q}$  with} \ u(a)=v(b)\}$. By our assumptions on $v$, for each $a\in \mathcal{S}$ the function $x-a$ has $m$ distinct zeroes and thus the fibre $\phi^{-1}(a)$ consists of $m$ points. Let $\mathcal{P}=\phi^{-1}(\mathcal{S})$ and $n=\# \mathcal{P}=m \,\#\mathcal{S}$. For a positive integer $l<n$ we consider the polytope
$
\mathfrak{P}=\{ (\alpha,\beta)\in\mathbb{N}_0^2 \; : \; \deg(v)\alpha+\deg(u)\beta\le l, \beta\le m-2\}
$,
which leads to the space of functions $V=\bigoplus_{i=0}^{m-2} \langle 1,x,\ldots,x^{l_i}\rangle y^i$. Clearly it holds that $V\subseteq \mathcal{L}(l Q)$.  
Consider also the set of points $\mathcal{P}=\phi^{-1}(\mathcal{S})\backslash\{Q\}$. Then we have a code $\mathcal{C}=\mbox{ev}_{\mathcal{P}}(V)$ of length $n$, dimension $\# \mathfrak{P}$ and minimum distance $d\ge n-l$. According to the results stated in Subsection \ref{constructionAG}, $\mathcal{C}$ is an LRC code  with locality $r=m-1$. 

\begin{proposition} 
The linear code $\mathcal{C}$ described above allows a local recovery  based on one addition. 
\end{proposition}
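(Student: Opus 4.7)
The plan is to generalize the computation shown in Example \ref{checksum}: show that the coordinates of any codeword sum to zero along each fibre of $\phi$, so a single missing coordinate is the negative of the sum of the remaining ones.

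First, I would fix $a\in\mathcal{S}$ and describe the fibre explicitly. By construction, $\phi^{-1}(a)=\{(a,b_1),\dots,(a,b_m)\}$ where $b_1,\dots,b_m$ are the $m$ distinct roots of $v(y)-u(a)\in\mathbb{F}_q[y]$. Since $v$ is a separable linearized polynomial of degree $m=p^h$, the polynomial $v(y)-u(a)$ is an affine $p$-polynomial of degree $m$ in the sense defined before Lemma \ref{lemma}.

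Next, I would write an arbitrary $f\in V$ uniquely in the form $f=\sum_{i=0}^{m-2}g_i(x)\,y^i$ with $g_i\in\mathbb{F}_q[x]$, evaluate at the fibre, and sum:
\begin{equation*}
\sum_{j=1}^{m} f(a,b_j) \;=\; \sum_{i=0}^{m-2} g_i(a)\sum_{j=1}^{m}b_j^{\,i} \;=\; \sum_{i=0}^{m-2} g_i(a)\,\pi_i(b_1,\dots,b_m).
\end{equation*}
The summands with $1\le i\le m-2$ vanish by Lemma \ref{lemma} applied to the affine $p$-polynomial $v(y)-u(a)$. For the remaining term $i=0$ we have $\pi_0(b_1,\dots,b_m)=m=p^h$, which is $0$ in $\mathbb{F}_q$ since $p=\mbox{char}(\mathbb{F}_q)$. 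Hence $\sum_{j=1}^{m}f(a,b_j)=0$.

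The recovery rule is then immediate: given an erasure at a coordinate $f(a,b_k)$, the recipient computes
\begin{equation*}
f(a,b_k) \;=\; -\sum_{j\neq k} f(a,b_j),
\end{equation*}
using only the other $m-1=r$ coordinates in the same fibre, which is exactly the prescribed recovering set. The only subtle step is the bookkeeping for $i=0$ — this is where the hypothesis that $m$ is a power of the characteristic is essential, and it is also why the space $V$ must truncate at $y^{m-2}$ (including $y^{m-1}$ would produce the term $\pi_{m-1}$, which is $(-1)^{m-2}(m-1)\sigma_{m-1}$ and generally nonzero). Everything else is a direct combination of the Newton–Girard identities already recorded above with Lemma \ref{lemma}.
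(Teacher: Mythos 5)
Your proof is correct and follows essentially the same route as the paper's: decompose $f=\sum_{i=0}^{m-2}g_i y^i$, sum over the fibre, kill the terms $1\le i\le m-2$ with Lemma \ref{lemma} applied to the affine $p$-polynomial $v(y)-u(a)$, and kill the $i=0$ term because $m=p^h\equiv 0$ in $\mathbb{F}_q$. The only difference is cosmetic — you make the vanishing of the $\pi_0$-term and the reason for truncating $V$ at $y^{m-2}$ explicit, which the paper leaves implicit.
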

\begin{proof}
Let $\phi^{-1}(a)=\{P_1,...,P_m\}$ be the fibre of $a\in\mathcal{S}$ and write $f_{a j}=f(P_j)$ the evaluation at  $P_j\in\phi^{-1}(a)$ of
$
f=\sum_{i=0}^{m-2} g_i y^i\in V 
$. 
The functions $g_i$ are constant over each fibre so we can write  $g_i=g_i(P_j)\in\mathbb{F}_{q}$ for any $P_j\in \phi^{-1}(a)$.  Then $f_{a j}= g_0 b_j^0+\cdots+g_{m-2} b_j^{m-2}$
with $b_j=y(P_j)$ and so
$$
\sum_{j=1}^m f_{a j}=mg_0+g_1\pi_1+\cdots+g_{m-2}\pi_{m-2}=g_1\pi_1+\cdots+g_{m-2}\pi_{m-2}
$$
where $\pi_i=\pi_i(b_1,\cdots,b_m)$ is the $i$-th Newton-Girard polynomial on the roots $b_1,\dots,b_m$ of $\ell(y)=v(y)-u(a)$. Since $\ell(y)$ is an affine $p$-polynomial, it follows from Lemma $\ref{lemma}$ that $\pi_1=\dots=\pi_{m-2}=0$. Therefore $f_{a1}+\cdots+f_{am}=0$
and the recovery of an erased symbol $f_{a j}$ is obtained through addition  of the remaining symbols in the fibre $\phi^{-1}(a)$.
\end{proof}

\begin{example}
Hermitian and Norm-Trace curves, among many others, belong to the family of curves we have considered. Thus LRC codes arising from them admit recovering of single erasures by one addition.
\end{example}

\section{LRC codes from other sets of points}
\label{othersets}

The same ideas and methods of the previous section can be applied to different geometric objects,  other than algebraic curves. In this section we will discuss three cases related to Affine Variety codes, Toric codes and Reed-Muller codes. All of them are well-known  in Coding Theory.

\subsection{Polytopes, polynomials and codes}
\label{polynomials}

Let us consider $\mathbb{F}_{q}[x_1,\dots,x_m]$ the ring of polynomials in $m$ indeterminates over the field $\mathbb{F}_{q}$.
A monomial in $\mathbb{F}_{q} [x_1,\dots,x_m]$ is called {\em reduced} if its degree  in each variable is at most $q-1$. A polynomial is reduced if it is a linear combination of reduced monomials. 
For any polynomial $f\in\mathbb{F}_{q}[x_1,\dots,x_m]$ there is a reduced polynomial  $f^*$ (obtained by reducing mod $q$ all exponents of $f$) such that $f(P)=f^*(P)$ for all $P\in\mathbb{A}^m(\mathbb{F}_{q})$. 
Thus, in what follows we shall restrict to consider reduced polynomials.

Given a polytope $\mathfrak{P}\subseteq [0,q-1]^m$ and two sets $\mathcal{A}\subseteq \mathbb{A}^m(\mathbb{F}_{q})$, $\mathcal{S}\subseteq \mathbb{A}^{t}(\mathbb{F}_{q})$, we can construct codes in the usual way. Let  $\phi: \mathbb{A}^m\rightarrow\mathbb{A}^{t}$ be the projection on $t$ coordinates and $\mathcal{P}=\phi^{-1}(\mathcal{S})$. We have the code $\mathcal{C}=\mathcal{C}(\mathcal{P}, V(\mathfrak{P}))$.  When these sets are properly chosen then $\mathcal{C}$ is an LRC code. The most obvious choices for a polytope are simplices and  hypercubes. Following the usual notation, given integers $l,l_1,\dots,l_m$, we write
\begin{eqnarray*}
\Delta(l) &=&\{ (\alpha_1,\dots,\alpha_m) \in \mathbb{N}_0^m \; : \;  \alpha_1+\cdots+\alpha_m\le l\}  ;  \\
\mathcal{H}(l_1,\dots,l_m) &=&\{ (\alpha_1,\dots,\alpha_m) \in \mathbb{N}_0^m \; : \; \alpha_i<l_i, \;  i=1,\dots,m\}.
\end{eqnarray*}
Note that when $\mathfrak{P}=\Delta(l)$ and $\mathcal{P}=\mathbb{A}^m(\mathbb{F}_{q})$ we obtain the Reed-Muller code $\mbox{RM}(l,m)$,  \cite{RM}. Next we shall state some properties of codes $\mathcal{C}(\mathcal{P}, V(\mathfrak{P}))$.

\begin{proposition}
Let $\mathfrak{P}\subseteq \mathcal{H}(q,\dots,q)\subseteq \mathbb{N}_0^m$ be a polytope,  $\mathcal{P}\subseteq  \mathbb{A}^m(\mathbb{F}_{q})$ a subset with $n$ points and $\mathcal{C}=\mathcal{C}(\mathcal{P}, V(\mathfrak{P}))$. Let $l$ be the maximum degree of a polynomial in $V(\mathfrak{P})$. Write $(q-1)m-l=\theta (q-1)+\mu$ with $0\le \mu<q-1$ and set $\delta=(\mu+1)q^{\theta}$. Then \newline
(a) the minimum distance of $\mathcal{C}$ is at least $\delta-q^m+n$; \newline
(b) if $\delta>q^m-n$ then $\mbox{ev}_{\mathcal{P}}:V(\mathfrak{P})\rightarrow\mathbb{F}_{q}^m$ is injective so $\dim(\mathcal{C})=\# \mathfrak{P}$.
\end{proposition}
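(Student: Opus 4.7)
The plan is to reduce the statement to the classical minimum distance formula for (generalized) Reed-Muller codes and then to treat $\mathcal{C}$ as a puncturing of a subcode of $\mathrm{RM}(l,m)$.

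First I would observe that every polynomial in $V(\mathfrak{P})$ is a reduced polynomial of total degree at most $l$, since $\mathfrak{P}\subseteq\mathcal{H}(q,\ldots,q)$ by hypothesis and $l$ was defined as the maximum total degree on $\mathfrak{P}$. Consequently the evaluation $\mathrm{ev}_{\mathbb{A}^m}(V(\mathfrak{P}))$ is a linear subcode of the generalized Reed-Muller code $\mathrm{RM}(l,m)$, and $\mathcal{C}=\mathrm{ev}_{\mathcal{P}}(V(\mathfrak{P}))$ is obtained from this subcode by projecting onto the $n$ coordinates indexed by $\mathcal{P}\subseteq\mathbb{A}^m(\mathbb{F}_q)$, i.e.\ by puncturing at the $q^m-n$ remaining points.

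Next I would invoke the classical Kasami-Lin-Peterson formula: the minimum distance of $\mathrm{RM}(l,m)$ is exactly $\delta=(\mu+1)q^\theta$, where $(q-1)m-l=\theta(q-1)+\mu$ with $0\le\mu<q-1$. This is the one external fact the proof relies on; no real obstacle lies here, as it is standard (and already implicit in the paper's use of $\mathrm{RM}(l,m)$).

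For part (a), let $f\in V(\mathfrak{P})$ be nonzero; viewed in $\mathrm{RM}(l,m)$, the vector $\mathrm{ev}_{\mathbb{A}^m}(f)\in\mathbb{F}_q^{q^m}$ has weight at least $\delta$. Puncturing on the $q^m-n$ coordinates outside $\mathcal{P}$ can destroy at most $q^m-n$ of these nonzero entries, so
\begin{equation*}
\mathrm{wt}(\mathrm{ev}_{\mathcal{P}}(f))\ \ge\ \delta-(q^m-n)\ =\ \delta-q^m+n,
\end{equation*}
giving the bound on $d(\mathcal{C})$.

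For part (b), the same inequality shows that if $\delta>q^m-n$, then $\mathrm{wt}(\mathrm{ev}_{\mathcal{P}}(f))\ge\delta-q^m+n>0$ for every nonzero $f\in V(\mathfrak{P})$. Hence $\mathrm{ev}_{\mathcal{P}}$ has trivial kernel on $V(\mathfrak{P})$, so it is injective and $\dim(\mathcal{C})=\dim V(\mathfrak{P})=\#\mathfrak{P}$, since the reduced monomials $\{\mathbf{x}^{\alpha}:\alpha\in\mathfrak{P}\}$ are linearly independent as functions on $\mathbb{A}^m(\mathbb{F}_q)$. There is no substantive obstacle beyond citing the Reed-Muller distance; the only subtlety to guard against is confirming that the reducedness hypothesis $\mathfrak{P}\subseteq\mathcal{H}(q,\ldots,q)$ is what legitimizes identifying polynomials in $V(\mathfrak{P})$ with their evaluation on $\mathbb{A}^m(\mathbb{F}_q)$.
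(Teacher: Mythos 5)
Your proposal is correct and follows exactly the paper's own argument: embed $\mathcal{C}$ as a puncturing of a subcode of $\mathrm{RM}(l,m)$, use the known minimum distance $\delta$ of $\mathrm{RM}(l,m)$, and note that deleting $q^m-n$ coordinates reduces weights by at most $q^m-n$. You merely spell out the "direct consequences" that the paper leaves implicit, including the useful observation that reducedness of the monomials (guaranteed by $\mathfrak{P}\subseteq\mathcal{H}(q,\dots,q)$) is what makes $\mathrm{ev}_{\mathbb{A}^m}$ injective on $V(\mathfrak{P})$.
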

\begin{proof}
Since  $l$ is the maximum degree of a polynomial in $V(\mathfrak{P})$, then $\mathcal{C}\subseteq \mbox{RM}(l,m)$, whose minimum distance is $\delta$,   \cite{RM}.
Furthermore $\mathcal{P}$ is obtained from $\mathbb{A}^m(\mathbb{F}_{q})$ by deleting $q^m-n$ points.
Both statements (a) and (b) are direct consequences of this fact.
\end{proof}  
  
The following result was proved for the case of Toric codes in \cite{prod}.

\begin{proposition}\label{producto}
Let $m_1, m_2$ be positive integers and $m=m_1m_2$. Consider the sets 
$\mathcal{P}_1\subseteq \mathbb{A}^{m_1}(\mathbb{F}_{q})$, $\mathcal{P}_2\subseteq \mathbb{A}^{m_2}(\mathbb{F}_{q})$, $\mathcal{P}= \mathcal{P}_1\times\mathcal{P}_2\subseteq \mathbb{A}^{m}(\mathbb{F}_{q})$ and  the polytopes $\mathfrak{P}_1\subseteq [0,q-1]^{m_1}$, 
$\mathfrak{P}_2\subseteq [0,q-1]^{m_2}$, $\mathfrak{P}=\mathfrak{P}_1\times \mathfrak{P}_2\subseteq [0,q-1]^{m}$.
Let $V_1, V_2, V$ be the linear spaces $V_1=V(\mathfrak{P}_1)$,  $V_2=V(\mathfrak{P}_2)$,  $V=V(\mathfrak{P})$. 
If both maps $\mbox{ev}_{\mathcal{P}_1}:V_1\rightarrow\mathbb{F}_{q}^{m_1}$ and $\mbox{ev}_{\mathcal{P}_2}:V_2\rightarrow\mathbb{F}_{q}^{m_2}$ are injective, then \newline
(a) the  map $\mbox{ev}_{\mathcal{P}}:V\rightarrow\mathbb{F}_{q}^{m}$ in injective; and \newline
(b) the minimum distances of the codes 
$\mathcal{C}(\mathcal{P}_1,V_1)$,  $\mathcal{C}(\mathcal{P}_2,V_2)$, $\mathcal{C}(\mathcal{P},V)$, satisfy
$d(\mathcal{C}(\mathcal{P},V))=d(\mathcal{C}(\mathcal{P}_1,V_1))\cdot d(\mathcal{C}(\mathcal{P},V))$.
\end{proposition}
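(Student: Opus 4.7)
The plan is to recognize $\mathcal{C}(\mathcal{P},V)$ as essentially a tensor (Kronecker) product of the two component codes, exploiting the factorization $V=V_1\otimes V_2$ forced by the product structure $\mathfrak{P}=\mathfrak{P}_1\times\mathfrak{P}_2$. Concretely, every element $f\in V$ admits a unique expansion
\[
f(X,Y)=\sum_{\alpha\in\mathfrak{P}_1} g_\alpha(Y)\, X^\alpha
\]
with $g_\alpha\in V_2$, where $X=(x_1,\dots,x_{m_1})$ and $Y=(x_{m_1+1},\dots,x_m)$. I would take this decomposition as the central bookkeeping tool throughout.

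For part (a), suppose $\mbox{ev}_{\mathcal{P}}(f)=0$. For each fixed $Q\in\mathcal{P}_2$, the partial evaluation $f(X,Q)=\sum_\alpha g_\alpha(Q)X^\alpha$ lies in $V_1$ and vanishes at every point of $\mathcal{P}_1$; injectivity of $\mbox{ev}_{\mathcal{P}_1}$ forces each coefficient $g_\alpha(Q)$ to vanish. Since $Q\in\mathcal{P}_2$ was arbitrary, every $g_\alpha\in V_2$ vanishes on $\mathcal{P}_2$, and a second application of the injectivity hypothesis gives $g_\alpha=0$, hence $f=0$.

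For part (b), read as $d(\mathcal{C}(\mathcal{P},V))=d(\mathcal{C}(\mathcal{P}_1,V_1))\cdot d(\mathcal{C}(\mathcal{P}_2,V_2))$ (the occurrence of $\mathcal{C}(\mathcal{P},V)$ on the right-hand side of the displayed equality in the statement appears to be a typographical slip), write $d_i$ for the component distances and prove the two inequalities separately. For the upper bound, pick $f_i\in V_i$ with $\mbox{wt}(\mbox{ev}_{\mathcal{P}_i}(f_i))=d_i$; then $f(X,Y):=f_1(X)f_2(Y)$ lies in $V$ and its evaluation vector is the outer product of the component evaluations, hence has weight exactly $d_1d_2$. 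For the lower bound, take any nonzero $f\in V$, expand as above, and set $T=\{Q\in\mathcal{P}_2:f(\cdot,Q)\not\equiv 0\text{ in }V_1\}$. Any $\alpha_0$ with $g_{\alpha_0}\neq 0$ satisfies $\{Q\in\mathcal{P}_2:g_{\alpha_0}(Q)\neq 0\}\subseteq T$, so $|T|\geq d_2$. For each $Q\in T$, the nonzero vector $\mbox{ev}_{\mathcal{P}_1}(f(\cdot,Q))$ contributes at least $d_1$ nonzero entries to $\mbox{ev}_{\mathcal{P}}(f)$, so the total weight is at least $d_1d_2$.

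The main obstacle is conceptual rather than computational: one must verify that the bilinear decomposition $V\cong V_1\otimes V_2$ is genuine, which rests squarely on the product-polytope hypothesis; without $\mathfrak{P}=\mathfrak{P}_1\times\mathfrak{P}_2$ the monomial basis $\{X^\alpha Y^\beta:(\alpha,\beta)\in\mathfrak{P}\}$ would not split as a product of bases of $V_1$ and $V_2$, and both arguments would collapse. Once this algebraic structure is in place, (a) is a two-step application of the injectivity hypotheses and (b) is a cascading application of the two component minimum distances.
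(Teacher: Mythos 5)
Your proposal is correct and follows essentially the same route as the paper's proof: expand $f$ along the monomial basis of one factor with coefficients in the other space, apply the two injectivity hypotheses in succession for (a), and for (b) combine the product-of-minimum-weight-words upper bound with a slice-by-slice counting lower bound (you also rightly read the displayed equality as $d=d_1d_2$, the right-hand side in the statement being a typo). One detail where your write-up is actually tighter: the paper bounds the number of points $P_1$ at which the whole slice $f(P_1,\cdot)$ vanishes by looking at the zeros of $\sum_{\bm{\beta}} f_{\bm{\beta}}$, which could be identically zero even when $f\neq 0$, whereas your use of a single nonzero coefficient $g_{\alpha_0}$ (whose nonzero evaluation positions all lie in $T$) avoids that gap.
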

\begin{proof}
Write $n_1=\mathcal{P}_1, n_2=\mathcal{P}_2, n=n_1n_2=\mathcal{P}$ and 
$d_1=d(\mathcal{C}(\mathcal{P}_1,V_1)), d_2=d(\mathcal{C}(\mathcal{P},V)), d=d(\mathcal{C}(\mathcal{P},V))$. A polynomial $f\in V$, $f\neq 0$, can be written as
$$
f=\sum_{\bm{\beta}\in\mathfrak{P}_2} f_{\bm{\beta}}(\bm{x})\bm{y}^{\bm{\beta}}
$$
where $\bm{\beta}=(\beta_1,\dots,\beta_{m_2})$, $\bm{y}^{\bm{\beta}}=y_1^{\beta_1}\dots y_{m_2}^{\beta_{m_2}}$ and
$f_{\bm{\beta}}\in V_1$. Denote by $Z(f)$ the set of zeroes of $f$.
(a) Since $\mbox{ev}_{\mathcal{P}_1}$ is injective, there exists a point $P_1\in\mathcal{P}_1$ such that $f(P_1,\bm{y})\neq 0$. Now  since $\mbox{ev}_{\mathcal{P}_2}$ is injective, there exists a point $P_2\in\mathcal{P}_2$ such that $f(P_1, P_2)\neq 0$ and so $\mbox{ev}_{\mathcal{P}}(f)\neq \bm{0}$. 
(b) Let $P_1\in\mathcal{P}_1$. If $P_1\in Z(\{  f_{\bm{\beta}}(\bm{x}) \, : \; \bm{\beta}\in \mbox{supp} (f)\})$ then $f(P_1,\bm{y})=0$ so $\# Z(f(P_1,\bm{y}))=n_2$. Observe that in this case we have $P_1\in Z(\sum f_{\bm{\beta}}(\bm{x}))\cap \mathcal{P}_1$. Since $\sum f_{\bm{\beta}}(\bm{x})\in V_1$ and $\mbox{ev}_{\mathcal{P}_1}$ is injective, this possibility happens for $s\le n_1-d_1$ points $P_1$.
 If $P_1\not\in Z(\{  f_{\bm{\beta}}(\bm{x}) \, : \; \bm{\beta}\in \mbox{supp} (f)\})$ then $f(P_1,\bm{y})$ has at most $n_2-d_2$ zeros $P_2\in\mathcal{P}_2$ because $\mbox{ev}_{\mathcal{P}_2}$ is injective. Thus
$$
\# Z(f) \le sn_2+(n_1-s)(n_2-d_2)=n-d_2(n_1-s)\le n-d_1d_2
$$
so $\mbox{wt}(f)\ge d_1d_2$ and  $d\le d_1d_2$.
Conversely if $f_1\in V_1, f_2\in V_2$ then $f_1f_2\in V$ and $\mbox{wt}(\mbox{ev}_{\mathcal{P}}(f))=\mbox{wt}(\mbox{ev}_{\mathcal{P}_1}(f_1)) \mbox{wt}(\mbox{ev}_{\mathcal{P}_2}(f_2))$, so $d\le d_1d_2$.
\end{proof}

\begin{corollary}\label{corprod}
Let $l_1,\dots,l_m \le q$ be positive integers and let $\mathfrak{P}$ be the hypercube $\mathfrak{P}=\mathcal{H}(l_1,\dots,l_m)$. Let  $\mathcal{P}=\mathcal{P}_1\times \cdots\times\mathcal{P}_m\subseteq \mathbb{A}^m(\mathbb{F}_{q})$ with $n_i=\# \mathcal{P}_i \ge l_i$ for all $i=1,\dots,m$. Then  $\mbox{ev}_{\mathcal{P}}:V(\mathfrak{P})\rightarrow\mathbb{F}_{q}^{m}$ in injective, so $\dim (\mathcal{C}(\mathcal{P}, V(\mathfrak{P})))=\#\mathfrak{P}$ and $d(\mathcal{C}(\mathcal{P}, V(\mathfrak{P})))=\prod_i (n_i-l_i+1)$.
\end{corollary}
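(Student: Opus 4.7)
My plan is to prove the corollary by induction on $m$, using Proposition \ref{producto} as the engine.

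\textbf{Base case $m=1$.} Here $\mathfrak{P}=\mathcal{H}(l_1)=\{0,1,\dots,l_1-1\}$, so $V(\mathfrak{P})=\langle 1,x,\dots,x^{l_1-1}\rangle$. A nonzero element of $V(\mathfrak{P})$ is a univariate polynomial of degree at most $l_1-1$, hence has at most $l_1-1<n_1$ zeros in $\mathcal{P}_1$. Therefore $\mbox{ev}_{\mathcal{P}_1}$ is injective and moreover every nonzero codeword has at least $n_1-(l_1-1)=n_1-l_1+1$ nonzero coordinates; the bound is attained by any polynomial of degree exactly $l_1-1$ whose roots lie in $\mathcal{P}_1$, which exists since $l_1-1\le n_1-1$. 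This matches the formula $\prod_{i=1}^1(n_i-l_i+1)$.

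\textbf{Inductive step.} Suppose the result holds for $m-1$ factors. Split the hypercube and the point set as
\[
\mathfrak{P}=\mathfrak{P}_1\times \mathfrak{P}_2, \qquad \mathfrak{P}_1=\mathcal{H}(l_1,\dots,l_{m-1}), \quad \mathfrak{P}_2=\mathcal{H}(l_m),
\]
\[
\mathcal{P}=(\mathcal{P}_1\times\cdots\times\mathcal{P}_{m-1})\times \mathcal{P}_m.
\]
By the induction hypothesis applied to $\mathfrak{P}_1$ (with $m-1$ variables) and to $\mathfrak{P}_2$ (the base case), both evaluation maps $\mbox{ev}_{\mathcal{P}_1\times\cdots\times\mathcal{P}_{m-1}}$ and $\mbox{ev}_{\mathcal{P}_m}$ are injective, and the corresponding codes have minimum distances $\prod_{i=1}^{m-1}(n_i-l_i+1)$ and $n_m-l_m+1$ respectively. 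Proposition \ref{producto}(a) then immediately gives that $\mbox{ev}_{\mathcal{P}}:V(\mathfrak{P})\to\mathbb{F}_q^n$ is injective, and Proposition \ref{producto}(b) gives that the minimum distance of $\mathcal{C}(\mathcal{P},V(\mathfrak{P}))$ is the product, namely $\prod_{i=1}^{m}(n_i-l_i+1)$.

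\textbf{Dimension.} Injectivity of $\mbox{ev}_{\mathcal{P}}$ implies $\dim\mathcal{C}(\mathcal{P},V(\mathfrak{P}))=\dim V(\mathfrak{P})$. Since $l_i\le q$, every monomial $x_1^{\alpha_1}\cdots x_m^{\alpha_m}$ with $(\alpha_1,\dots,\alpha_m)\in\mathfrak{P}$ is reduced, and distinct reduced monomials are linearly independent, so $\dim V(\mathfrak{P})=\#\mathfrak{P}=\prod_i l_i$.

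\textbf{Anticipated obstacle.} The only subtle point is the assumption in Proposition \ref{producto} that both factor evaluation maps are injective: this is exactly what the induction hypothesis guarantees, so once one sees that the hypercube polytope factors as a product compatible with the product structure of $\mathcal{P}$, the argument is essentially bookkeeping. No further analytic work is needed; the heavy lifting has already been done in Proposition \ref{producto}.
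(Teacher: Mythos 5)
Your proof is correct and follows essentially the same route as the paper: the paper also decomposes $\mathcal{H}(l_1,\dots,l_m)$ as a product of $m$ segments, each yielding a Reed--Solomon code, and invokes Proposition \ref{producto}; you have merely made the induction and the base case explicit. No gaps.
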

\begin{proof}
$\mathcal{H}(l_1,\dots,l_m)$ is the product of $m$ segments $[0,l_i-1]$,  each of them leading to a Reed-Solomon code. The result follows from Proposition \ref{producto}.
\end{proof}

\subsection{LRC codes from Affine Variety codes}
\label{affinevariety}

Given $m$ positive integers $n_1,\dots,n_m$, such that $n_i|q-1$ for all $i$, set $v_i=(q-1)/n_i$ and consider the ideal $I$ of $\mathbb{F}_{q}[x_1,\dots,x_m]$ generated by the monomials $x_1^{n_1}-1,\dots,x_m^{n_m}-1$. Let $\mathcal{A}$ be the set of zeroes of $I$, $\mathcal{A}=Z(I)$. If $\xi$ is a primitive element of $\mathbb{F}_{q}$ then $\xi^{v_i}$ is a root of $x_i^{n_i}-1$, hence $\mathcal{A}$ consists of $n=n_1\cdots n_m$ points.  Furthermore, if we consider the projection map $\phi=(x_1,\dots,x_{i-1},x_{i+1},\dots,x_m): \mathbb{A}^m\rightarrow\mathbb{A}^{m-1}$, for every point $S\in\mathcal{S}=\phi(\mathcal{A})$ we have $\# \phi^{-1}(S)=n_i$.

For a polytope $\mathfrak{P}\subseteq\mathcal{H}(n_1,\dots,n_m)$ we consider the linear space of polynomials $V(\mathfrak{P})$ $=\langle\{ x_1^{\alpha_1}\cdots x_m^{\alpha_m} : (\alpha_1,\dots,\alpha_m)\in \mathfrak{P}\}\rangle$ and 
define the {\em Affine Variety} code $\mathcal{C}$ as the image of the evaluation map, $\mathcal{C}=\mathcal{C}(\mathcal{A},V(\mathfrak{P}))=\mbox{ev}_{\mathcal{A}}(V(\mathfrak{P}))$.
Then $\mathcal{C}$  is a code of length $n$ and dimension $\# \mathfrak{P}$ by Corollary \ref{corprod}. Note that in the language of Section \ref{section3}, we take $\mathcal{P}=\mathcal{A}$. 

Now we intend to give some properties on $\mathfrak{P}$ in order to obtain LRC codes whose local recovery is based on polynomial interpolation.  Define the degree of $\mathfrak{P}$ with respect to $i$ (or $i$-degree of $\mathfrak{P}$) to be the maximum $i$-degree of a polynomial in $V(\mathfrak{P})$.

\begin{theorem}
If $\mathfrak{P}$ has $i$-degree $r-1$ with $1\le r\le n_i-1$  for some $i$, then the code $\mathcal{C}(\mathcal{P},V(\mathfrak{P}))$ is a locally recoverable code of locality $r$.   The local recovery of an erased symbol $f(P)$, $P\in\mathcal{P}$, may be performed by  Lagrangian interpolation at any $r$ other points in the fibre $\phi^{-1}(\phi(P))$.
\end{theorem}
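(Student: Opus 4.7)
The plan is to reduce the statement to a univariate Lagrangian interpolation problem along each fibre of $\phi$, exactly as in the general template of Section \ref{section2.2}. First I would exploit the hypothesis on the $i$-degree: since every monomial $x_1^{\alpha_1}\cdots x_m^{\alpha_m}$ with $(\alpha_1,\dots,\alpha_m)\in\mathfrak{P}$ satisfies $\alpha_i\le r-1$, any $f\in V(\mathfrak{P})$ admits the unique decomposition
$$
f=\sum_{j=0}^{r-1} g_j(x_1,\dots,\widehat{x_i},\dots,x_m)\,x_i^j,
$$
with $g_j\in\mathbb{F}_{q}[x_1,\dots,\widehat{x_i},\dots,x_m]$. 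In the language of equation (\ref{defV}), this is precisely the decomposition $V=\bigoplus_{j=0}^{r-1} V_j\,\phi_{t+1}^j$ with $\phi_{t+1}=x_i$ and the $V_j$ spanned by monomials in the remaining variables.

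Next I would observe that for any $S\in\mathcal{S}$, every $P'$ in the fibre $\phi^{-1}(S)$ has the same values of the coordinates other than $x_i$, so $g_j(P')=g_j(S)=:c_j(S)\in\mathbb{F}_{q}$ is independent of $P'$. Therefore the restriction of $f$ to the fibre is the univariate polynomial
$$
f_S(T)=\sum_{j=0}^{r-1} c_j(S)\,T^j\in\mathbb{F}_{q}[T],
$$
of degree at most $r-1$, and $f(P')=f_S(x_i(P'))$ for every $P'\in\phi^{-1}(S)$.

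Finally I would use the arithmetic of the fibres built into the construction: the points of $\mathcal{A}\cap\phi^{-1}(S)$ are obtained by letting the $i$-th coordinate range over the $n_i$ distinct $n_i$-th roots of unity $1,\xi^{v_i},\dots,\xi^{(n_i-1)v_i}$, so there are exactly $n_i$ of them with pairwise distinct $x_i$-coordinates. Since $r\le n_i-1$, removing the erased position $P$ still leaves at least $r$ points $P_1,\dots,P_r\in\phi^{-1}(\phi(P))\setminus\{P\}$ whose $x_i$-coordinates are pairwise distinct, so $f_S(T)$ is determined uniquely from the known values $f(P_1),\dots,f(P_r)$ by Lagrangian interpolation; evaluating $f_S$ at $x_i(P)$ then recovers $f(P)$. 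This exhibits a recovering set of size $r$ for every coordinate, yielding locality $r$. I do not expect any genuine obstacle: the only points to check are that the $g_j$ are truly constant on fibres (immediate from their independence of $x_i$) and that after one erasure the fibre still supplies $r$ distinct interpolation nodes (guaranteed by $r\le n_i-1$).
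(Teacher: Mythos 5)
Your argument is correct and follows essentially the same route as the paper's proof: decompose $f=\sum_{j=0}^{r-1}g_j x_i^j$ with the $g_j$ free of $x_i$, note that the $g_j$ are constant on each fibre of $\phi$, and recover $f(P)$ by Lagrangian interpolation of the resulting degree-$\le r-1$ univariate polynomial at $r$ of the remaining $n_i-1$ points of the fibre. Your additional remarks (uniqueness of the decomposition, the $x_i$-coordinates in a fibre being the $n_i$ distinct $n_i$-th roots of unity) just make explicit what the paper leaves implicit.
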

\begin{proof}
Note that
the fibre $\phi^{-1}(\phi(P))$ has $n_i-1$ points different from $P$.    Write
$
V=V(\mathfrak{P})=\bigoplus_{j=0}^{r-1} V_j \; x_i^j
$,
with $V_j\subset \mathbb{F}_{q}[x_1,\dots,x_{i-1},x_{i+1},\dots,x_m]$.  For $f\in V(\mathfrak{P})$  we can write $f=\sum g_jx_i^j$ with $g_j\in V_j$.  Observe that these  $g_j$'s are all constant in the fibre $\phi^{-1}(\phi(P))$ and thus $f(P)$ can be recovered  by interpolating a univariate polynomial in the indeterminate $x_i$.
\end{proof}
  
The same approach provides locally recoverable codes for multiple errors and  enables us to consider LRC codes with availability.
 
\begin{corollary}
If $\mathfrak{P}$ has $i$-degree $n_i-j$ for some $i,j$, with  $j\ge 2$,  then the code $\mathcal{C}(\mathcal{P},V(\mathfrak{P}))$ is a
locally recoverable code for multiple errors,  of locality $r =n_i-1$ and capability $\rho= j$.  The local recovery of $\rho$ erased symbols  may be performed by  Lagrangian interpolation at the other $n_i-j+1$ points in the fibre $\phi^{-1}(\phi(P))$.
\end{corollary}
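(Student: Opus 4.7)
The proof is essentially a routine extension of the preceding theorem, so the plan is to mimic that argument but track how many function evaluations are needed to pin down the restricted univariate polynomial. I would fix a coordinate $i$ for which $\mathfrak{P}$ has $i$-degree $n_i-j$, write every $f\in V(\mathfrak{P})$ as
$$
f=\sum_{k=0}^{n_i-j} g_k\,x_i^k,
$$
with each $g_k$ belonging to $\mathbb{F}_{q}[x_1,\dots,x_{i-1},x_{i+1},\dots,x_m]$. The crucial input here, as in the preceding theorem, is the definition of the $i$-degree: it guarantees that no monomial in $V(\mathfrak{P})$ contributes a power of $x_i$ larger than $n_i-j$, so the sum above actually exhausts all of $V(\mathfrak{P})$.

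Next I would fix a point $P\in\mathcal{P}$ and study the fibre $\phi^{-1}(\phi(P))$. Because $\phi$ projects onto all coordinates other than $x_i$, every point of this fibre agrees with $P$ in those $m-1$ coordinates, so each $g_k$ collapses to a constant $c_k\in\mathbb{F}_{q}$ on the fibre. Thus the restriction of $f$ to $\phi^{-1}(\phi(P))$ is a univariate polynomial $\sum_{k=0}^{n_i-j}c_k x_i^k$ of degree at most $n_i-j$. I would also invoke the description of $\mathcal{A}$ from the start of the subsection to note that the $x_i$-coordinates of the $n_i$ fibre points are exactly the $n_i$ distinct $n_i$-th roots of unity in $\mathbb{F}_{q}$, so all the interpolation nodes are pairwise distinct.

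To finish, I would apply Lagrangian interpolation: a univariate polynomial of degree at most $n_i-j$ is uniquely determined by its values at any $n_i-j+1$ distinct points. Hence, given at least $n_i-j+1$ non-erased coordinates in the fibre, the coefficients $c_0,\dots,c_{n_i-j}$ can be reconstructed, and then the erased values $f(P')$ are recovered by evaluating the recovered polynomial at the corresponding roots of unity. Since the fibre contains $n_i$ points, this procedure tolerates up to $n_i-(n_i-j+1)=j-1$ simultaneous missing coordinates, which matches the capability $\rho=j$ and locality $r=n_i-1$ claimed in the statement.

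There is no real obstacle: the argument is a verbatim adaptation of the previous theorem, with the single numerical change that the restricted polynomial now has degree $n_i-j$ instead of $r-1$, so the interpolation requires $n_i-j+1$ nodes instead of $r$. The only small care-point is making sure that the available evaluations correspond to distinct values of $x_i$, which is automatic from the structure of $\mathcal{A}$.
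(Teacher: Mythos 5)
Your proof is correct and is exactly the argument the paper intends: the corollary is an immediate adaptation of the preceding theorem's proof, replacing the $i$-degree $r-1$ by $n_i-j$ so that the restriction of $f$ to a fibre is a univariate polynomial of degree at most $n_i-j$, recoverable from any $n_i-j+1$ of the $n_i$ distinct interpolation nodes, hence tolerating $j-1$ erasures, i.e.\ capability $\rho=j$ in the paper's convention. No gaps.
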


\begin{corollary}
If $\mathfrak{P}$ has $i_j$-degree $r-1$ for some $i_1,\dots, i_t$, and $1\le r\le n_{i_j}-1$, then the code $\mathcal{C}(\mathcal{P},V(\mathfrak{P}))$ is a
locally recoverable code with $t$-availability and locality $r$ for all  $i_1,\dots, i_t$.   
\end{corollary}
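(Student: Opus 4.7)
The plan is to invoke the preceding theorem once for each index $i_j$ and then check that the $t$ recovering sets produced are pairwise disjoint, which is what ``$t$-availability'' requires.

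First, I would fix a point $P\in\mathcal{P}$ and, for each $j=1,\dots,t$, apply the preceding theorem with $i=i_j$. This gives, for each $j$, a projection map $\phi_j:\mathbb{A}^m\to\mathbb{A}^{m-1}$ that forgets the $i_j$-th coordinate, whose fibre $\phi_j^{-1}(\phi_j(P))\cap\mathcal{P}$ consists of $n_{i_j}$ points (the points of $\mathcal{P}$ that coincide with $P$ in every coordinate except possibly the $i_j$-th). Since $\mathfrak{P}$ has $i_j$-degree $r-1$ with $r\le n_{i_j}-1$, the theorem supplies a recovering set $R_j(P)\subseteq \phi_j^{-1}(\phi_j(P))\setminus\{P\}$ of size exactly $r$ from which $f(P)$ can be recovered by Lagrangian interpolation in the variable $x_{i_j}$.

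Second, I would verify disjointness. Suppose $P'\in R_j(P)\cap R_k(P)$ for some $j\ne k$. Being in $R_j(P)$ forces $P'\ne P$ and $P'$ to agree with $P$ on every coordinate except (possibly) the $i_j$-th; being in $R_k(P)$ similarly forces agreement everywhere except possibly the $i_k$-th. Since $P'\ne P$ they must differ in at least one position, and that position must be both $i_j$ and $i_k$, contradicting $i_j\ne i_k$. Hence the sets $R_1(P),\dots,R_t(P)$ are pairwise disjoint.

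Combining the two observations, each coordinate admits $t$ disjoint recovering sets each of size $r$, with local recovery carried out independently by interpolation along the $i_j$-th axis for $j=1,\dots,t$. This is precisely the definition of $t$-availability with locality $r$, and the conclusion follows. I do not see any serious obstacle here: the previous theorem does the algebraic work, and the product structure of $\mathcal{P}$ (inherited from the ideal $I$) makes the disjointness check a one-line coordinate argument.
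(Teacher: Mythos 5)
Your proposal is correct and follows exactly the argument the paper intends: the corollary is stated without proof as an immediate consequence of the preceding theorem, and the paper itself uses your disjointness observation verbatim in Subsection 4.4 ("the fibres $\phi^{-1}(\phi(P))$ and $\psi^{-1}(\psi(P))$ only meet at $P$"). Applying the theorem once per index $i_j$ and noting that the product structure of $\mathcal{A}=Z(I)$ forces the resulting punctured fibres to be pairwise disjoint is precisely what is needed.
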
  
   
The most obvious choice to produce LRC Affine Variety codes  is to take $i=m$ and the polytope  $\mathfrak{P}=\mathcal{H}(n_1,n_2,\dots,n_m-1)$, which gives a code $\mathcal{C}=\mathcal{C}(\mathcal{P}, V(\mathfrak{P}))$  of locality $r=n_m-1$. 
>From Corollary  \ref{corprod},  the minimum distance of  $\mathcal{C}$ is $2$ so it is an optimal trivial code. Subsequent manipulations of that polytope also provide interesting codes.  Note that when 
$\mathfrak{P}$  is a hypercube in $\mathcal{H}(n_1,n_2,\dots,n_m-1)$ then the minimum distance of  $\mathcal{C}(\mathcal{P}, V(\mathfrak{P}))$ is also given by Corollary \ref{corprod}.

\begin{example}\label{exaff2}
Consider the field $\mathbb{F}_7$, the ring of polynomials in two variables $\mathbb{F}_7[x,y]$ and the projection $\phi=x: \mathbb{A}^2\rightarrow \mathbb{A}^1$. Let us show some LRC codes obtained by the previous construction. The following  Table \ref{texaff2} contains the parameters of nontrivial LRC codes obtained by using the following polytopes:
$\mathfrak{P}_1=\mathcal{H}(2,2)\setminus \{ (1,1)\}$,
$\mathfrak{P}_2=\mathcal{H}(3,2)\setminus \{ (2,1)\}$,
$\mathfrak{P}_3=\mathcal{H}(2,5)\setminus \{ (1,4)\}$,
$\mathfrak{P}_4=\mathfrak{P}_3\setminus \{ (1,3)\}$,
$\mathfrak{P}_5=\mathfrak{P}_4\setminus \{ (1,2),(0,4)\}$,
$\mathfrak{P}_{6}=\mathcal{H}(3,5)\setminus \{ (2,4)\}$,
$\mathfrak{P}_{7}=\mathfrak{P}_{6}\setminus \{ (2,3)\}$.
The minimum distances of these codes have been computed with Magma. The codes corresponding to the polytopes $\mathfrak{P}_3,\mathfrak{P}_4$ have the best known parameters according to the  tables \cite{CodeTables}.

\begin{table}[htbp]
\small
\begin{center}
\begin{tabular}{ccccc}
$n_1,n_2$ & polytope & parameters & locality & remarks \\
\hline 
$2,3$ & $\mathfrak{P}_1$ & $[6,3,3]$ & $2$ &  optimal  \\ 
$3,3$ & $\mathfrak{P}_2$ & $[9,5,3]$ & $2$ & optimal  \\ 
$2,6$ & $\mathfrak{P}_3$ & $[12,9,3]$ &5 & optimal  \\ 
$2,6$ & $\mathfrak{P}_4$ & $[12,8,4]$ &5 & optimal  \\ 
$2,6$ & $\mathfrak{P}_5$ & $[12,6,5]$ & 4 & defect 1  \\ 
$3,6$ & $\mathfrak{P}_{6}$ & $[18,14,3]$ & 5 & optimal  \\ 
$3,6$ & $\mathfrak{P}_{7}$ & $[18,13,4]$ & 5 & optimal  \\ \hline
\end{tabular}
\vspace*{-1mm}
\caption{\small LRC Affine Variety codes of Example \ref{exaff2}}
\label{texaff2}
\end{center}
\end{table}
\end{example}

\begin{example}\label{exaff3}
Consider again the field $\mathbb{F}_7$, the ring of polynomials in three variables $\mathbb{F}_7[x,y,z]$ and the projection $\phi=(x,y): \mathbb{A}^3\rightarrow \mathbb{A}^2$.  The following  Table \ref{texaff3} contains the parameters of  nontrivial LRC codes obtained by using the following polytopes:
$\mathfrak{Q}1=\mathcal{H}(2,2,2)\setminus \{ (1,1,1)\}$,
$\mathfrak{Q}_2=\mathfrak{Q}_1\setminus \{ (1,1,0)\}$
$\mathfrak{Q}_3=\mathcal{H}(2,2,5)\setminus \{ (1,1,4)\}$,
$\mathfrak{Q}_4=\mathfrak{Q}_3\setminus \{ (1,1,3)\}$,
$\mathfrak{Q}_5=\mathcal{H}(3,3,2)\setminus \{ (2,2,1)\}$,
$\mathfrak{Q}_6=\mathfrak{Q}_5\setminus \{ (2,2,0)\}$,
$\mathfrak{Q}_7=\mathcal{H}(3,3,5)\setminus \{ (2,2,4)\}$,
$\mathfrak{Q}_8=\mathfrak{Q}_7\setminus \{ (2,2,3)\}$.

\begin{table}[htbp]
\small
\begin{center}
\begin{tabular}{ccccc}
$n_1,n_2,n_3$ & polytope & parameters & locality & remarks \\
\hline 
$2,2,3$ & $\mathfrak{Q}_1$ & $[12,7,3]$ & 2 & optimal  \\
$2,2,3$ & $\mathfrak{Q}_2$ & $[12,6,4]$ & 2 & almost-optimal  \\ 
$2,2,6$ & $\mathfrak{Q}_3$ & $[24,19,3]$ & 5 & optimal  \\ 
$2,2,6$ & $\mathfrak{Q}_4$ & $[24,18,4]$ & 5 & optimal  \\ 
$3,3,3$ & $\mathfrak{Q}_5$ & $[27,17,3]$ & 2 & optimal  \\ 
$3,3,3$ & $\mathfrak{Q}_6$ & $[27,16,4]$ & 2 & almost-optimal  \\ 
$3,3,6$ & $\mathfrak{Q}_7$ & $[54,44,3]$ & 5 & optimal  \\ 
$3,3,6$ & $\mathfrak{Q}_8$ & $[54,13,4]$ & 5 & optimal  \\  \hline
\end{tabular}
\vspace*{-1mm}
\caption{LRC Affine Variety codes of Example \ref{exaff3}}
\label{texaff3}
\end{center}
\end{table}
\end{example}

\subsection{LRC codes from Toric codes}
\label{toric}

Affine Variety codes with $n_i=q-1$, $i=1,\dots,m$, are called {\em Toric codes}. In this case $n=(q-1)^m$ and $\mathcal{A}=\mathcal{P}=(\mathbb{F}_{q}^*)^n$.   Toric codes are rather well known and this family includes some codes with good parameters, \cite{toric1}.
  
As in the case of  other polynomial codes, the most obvious choice to produce LRC Toric codes  is to take a polytope  $\mathfrak{P}$  `close' to the hypercube $\mathcal{H}(q-1,\dots,q-1,q-2)$ or to the simplex $\Delta(l)$, $l\le q-1$. Recall that $\# \Delta(l)=\binom{m+l}{m}$ and  the minimum distance of  $\mathcal{C}(\mathcal{P},V(\Delta(l)))$ is known to be   $(q-1)^{m-1}(q-l-1)$, \cite{toric1}.

\begin{example}\label{toric2d}
Consider the field $\mathbb{F}_7$, the rings of polynomials in two and three variables  and the projection $\phi=x$.  The following  Table \ref{Ttoric2d} contains the parameters of  LRC codes obtained by using the following polytopes: 
$\mathfrak{P}_{1}=\mathcal{H}(6,5)\setminus \{ (5,4)\}$,
$\mathfrak{P}_{2}=\mathfrak{P}_{1}\setminus \{ (5,3)\}$,
$\mathfrak{P}_{3}=\mathfrak{P}_{2}\setminus \{ (5,2),(4,4)\}$, 
$\mathfrak{P}_{4}=\mathfrak{P}_{3}\backslash\{(5,1)\}$,  
$\mathfrak{R}_{1}=\Delta(4)\backslash\{(4,0),(0,4)\}$ 
in dimension two; and
$\mathfrak{Q}_{1}=\mathcal{H}(6,6,5)\setminus \{ (5,5,4)\}$,
$\mathfrak{Q}_{2}=\mathcal{Q}_{1}\setminus \{ (0,0,0)\}$
in dimension three. For a better understanding of these results we recall that the best known  $[36,25]$ and $[36,13]$ codes over $\mathbb{F}_7$ have minimum distances 7 and 17 respectively, \cite{CodeTables}.     
      
\begin{table}[htbp]
\small
\begin{center}
\begin{tabular}{cccc}
 polytope & parameters & locality & remarks \\
\hline 
 $\mathfrak{P}_{1}$ & $[36,29,3]$   & 5 & optimal  \\ 
 $\mathfrak{P}_{2}$ & $[36,28,4]$   & 5 & optimal  \\ 
 $\mathfrak{P}_{3}$ & $[36,26,5]$   & 5 & defect 1  \\ 
 $\mathfrak{P}_{4}$ & $[36,25,6]$   & 5 & defect 2  \\ 
 $\mathfrak{R}_{1}$ & $[36,13,15]$ & 4 & availability 2  \\ 
                               &                     &    & corrects 2 erasures  \\ 
 $\mathfrak{Q}_{1}$ & $[ 216, 179, 3 ]$ & 5 & optimal  \\ 
 $\mathfrak{Q}_{2}$ & $[ 216, 178, 4 ]$ & 5 & optimal  \\  \hline
\end{tabular}
\vspace*{-2mm}
\caption{LRC Toric codes of Example \ref{toric2d}}
\label{Ttoric2d}
\end{center}
\end{table}   
\end{example}

\subsection{LRC codes from Reed-Muller codes}\label{RM}

Let  $\mathcal{R}[x_1,\dots,x_m]$ be the set of reduced polynomials in $\mathbb{F}_{q}[x_1,\dots,x_m]$ and $\mathcal{R}[x_1,\dots,x_m]_l$ be the set of reduced polynomials of total degree at most $l$.  
Note that these sets  are the spaces of functions $V(\mathfrak{P})$ associated to the polytopes $\mathfrak{P}=\mathcal{H}(q,\dots,q)$ and
$\mathfrak{P}=\Delta(l)$, respectively.  It is well known that the evaluation at $\mathcal{A}=\mathbb{A}^m(\mathbb{F}_{q})$ map  $\mbox{ev}_{\mathcal{A}}:\mathcal{R}\rightarrow\mathbb{F}_{q}^n$ is an isomorphism (\cite{RM} or  Corollary \ref{corprod}).  The Reed-Muller code
RM$(l,m)$ is defined as RM$(l,m)=\mbox{ev}_{\mathcal{A}}(\mathcal{R}[x_1,\dots,x_m]_l)$.  When $l\le q-2$ this is already an LRC code. To see that we take  $\mathcal{S}=\mathbb{A}^{m-1}(\mathbb{F}_{q})$,  the map $\phi=(x_1,\dots,x_{m-1}) : \mathbb{A}^m(\mathbb{F}_{q})\rightarrow \mathbb{A}^{m-1}(\mathbb{F}_{q})$ and $\mathcal{P}=\mathcal{A}$. It is clear that for any $S\in\mathbb{A}^{m-1}(\mathbb{F}_{q})$ the fibre $\phi^{-1}(S)$ consists of $q$ points, hence RM$(l,m)$ has locality $r=q-1$ and length  $n=\#\mathbb{A}^m(\mathbb{F}_{q})=q^m$.   When $l\le q-1$ then  RM$(l,m)$ has minimum distance $d=(q-l)q^{m-1}$ and dimension is $\#\Delta(l)=\binom{m+l}{m}$. 

In general, given a polytope $\mathfrak{P}\subseteq \mathcal{H}(q,\dots,q,q-1)$ we can consider the linear space $V(\mathfrak{P})$, which can be written as $V(\mathfrak{P})=\bigoplus_{i=0}^{q-2} V_i \, x_m^i$ with  $V_i\subseteq \mathcal{R}[x_1,\dots,x_{m-1}]$, and the code  $\mathcal{C}(\mathcal{P},V(\mathfrak{P}))=\mbox{ev}_{\mathcal{P}}(V(\mathfrak{P}))$.
This is an LRC code on length $n=q^m$, locality $r\le q-1$ and dimension $k=\# \mathfrak{P}$, since $\mbox{ev}_{\mathcal{P}}$ in injective.  The next  propositions state two interesting properties of such codes $\mathcal{C}(\mathcal{P},V(\mathfrak{P}))$.

\begin{proposition}
Let $\mathfrak{P}\subseteq\mathcal{H}((q,\dots,q,q-1)$ be a polytope  and let $\mathcal{C}=\mathcal{C}(\mathcal{P},V(\mathfrak{P}))$.  For any codeword $\mathbf{c}\in \mathcal{C}$ the sum of the coordinates of $\mathbf{c}$ corresponding to each fibre of $\phi$ is zero. Thus $\mathcal{C}$ allows a local recovery of single erasures by one addition.
\end{proposition}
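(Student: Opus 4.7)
The plan is to reduce the statement to the classical power-sum identity over $\mathbb{F}_q$, namely that $\sum_{\alpha \in \mathbb{F}_q}\alpha^i = 0$ whenever $0 \le i \le q-2$ (with the convention $0^0 = 1$). This is precisely the arithmetic fact that makes the exponent cap $\alpha_m \le q-2$ imposed by $\mathfrak{P}\subseteq \mathcal{H}(q,\dots,q,q-1)$ the right hypothesis.

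First, I would write any $f \in V(\mathfrak{P})$ as a polynomial in $x_m$:
\begin{equation*}
f(x_1,\dots,x_m)=\sum_{i=0}^{q-2} g_i(x_1,\dots,x_{m-1})\,x_m^{i},
\end{equation*}
where each $g_i \in \mathcal{R}[x_1,\dots,x_{m-1}]$ (this is possible because every monomial of $V(\mathfrak{P})$ has $x_m$-exponent at most $q-2$, thanks to the assumption $\mathfrak{P}\subseteq\mathcal{H}(q,\dots,q,q-1)$). Fix a fibre $\phi^{-1}(S)=\{(S,\alpha) : \alpha\in\mathbb{F}_q\}$ over $S=(s_1,\dots,s_{m-1})\in \mathbb{A}^{m-1}(\mathbb{F}_q)$, set $c_i=g_i(S)\in\mathbb{F}_q$, and compute
\begin{equation*}
\sum_{\alpha\in\mathbb{F}_q} f(S,\alpha) \;=\; \sum_{\alpha\in\mathbb{F}_q}\sum_{i=0}^{q-2} c_i\,\alpha^{i} \;=\; \sum_{i=0}^{q-2} c_i\Bigl(\sum_{\alpha\in\mathbb{F}_q}\alpha^{i}\Bigr).
\end{equation*}

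Next I would invoke the power-sum identity. For $i=0$ we get $\sum_{\alpha\in\mathbb{F}_q}1 = q = 0$ in $\mathbb{F}_q$. For $1\le i\le q-2$, the standard argument is that if $\gamma$ is a primitive element of $\mathbb{F}_q^{*}$ then $\gamma^{i}\ne 1$, so
\begin{equation*}
(\gamma^{i}-1)\sum_{\alpha\in\mathbb{F}_q^{*}}\alpha^{i} \;=\; \sum_{\alpha\in\mathbb{F}_q^{*}}\bigl((\gamma\alpha)^{i}-\alpha^{i}\bigr) \;=\; 0,
\end{equation*}
hence $\sum_{\alpha\in\mathbb{F}_q^{*}}\alpha^{i}=0$, and adjoining $\alpha=0$ does not change the sum. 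Thus every inner sum vanishes, and so does the whole expression.

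Consequently the coordinates of $\mbox{ev}_{\mathcal{P}}(f)$ indexed by $\phi^{-1}(S)$ add up to zero for each $S$, which gives the sum-zero property. The recovery rule is then immediate: if a single coordinate $f(S,\alpha_0)$ is erased, one simply outputs $f(S,\alpha_0)=-\sum_{\alpha\ne\alpha_0} f(S,\alpha)$. There is no real obstacle here; the only point that requires care is keeping track of why the cap $\alpha_m\le q-2$ is essential, since allowing the exponent $i=q-1$ would introduce a nonzero term $c_{q-1}\sum_\alpha \alpha^{q-1} = -c_{q-1}$ and destroy the identity.
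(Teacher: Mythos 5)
Your proof is correct and follows essentially the same route as the paper's: restrict $f$ to a fibre, where the hypothesis $\mathfrak{P}\subseteq\mathcal{H}(q,\dots,q,q-1)$ makes it a univariate polynomial in $x_m$ of degree at most $q-2$, and then sum over $\alpha\in\mathbb{F}_q$. The only cosmetic difference is in how the vanishing of the power sums $\sum_{\alpha\in\mathbb{F}_q}\alpha^i$ for $0\le i\le q-2$ is justified: the paper invokes its Lemma 1 applied to the linearized polynomial $x^q-x$, whereas you prove the same identity directly with a primitive element.
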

\begin{proof}
Let $f\in V(\mathfrak{P})$ and $P\in\mathcal{P}$. Write $\phi(P)=S$ and $\phi^{-1}(S)=\{ P_1,\dots,P_{q}\}$ the fibre of $S$, with $P=P_j$ for some $j$.  The points $ P_1,\dots,P_{q}$, differ in their $m$-th coordinate, which runs through all the elements of $\mathbb{F}_{q}$. Then, since  $f$ acts on the fibre  $\phi^{-1}(S)$ as a polynomial $f_S (x_m)$, we have
$$
\sum_{i=1}^q f(P_i)=\sum_{\alpha\in\mathbb{F}_{q}} f_S(\alpha)=0
$$
where the last equality follows from Lemma \ref{lemma} as the elements $\alpha\in\mathbb{F}_{q}$ are precisely the roots of the linearized polynomial $x^q-x$ (or see \cite{RM}).
\end{proof} 

The second interesting property of these codes refers to the availability problem. Fix a variable $x_j$. Let $\delta$ be the maximum degree in $x_j$ of all elements in $V(\mathfrak{P})$. Then this space can be written as 
$V(\mathfrak{P})=\bigoplus_{i=0}^{\delta} V'_i \, x_j^i$     
with $V'_i\subseteq\mathcal{R}[x_1,\dots,x_{j-1},x_{j+1},\dots,x_{m}]$.
If $\delta\le q-2$, an erasure at the position $f(P)$ of the word $\mbox{ev}_{\mathcal{P}}(f)$ can be recovered from the fibre  $\psi^{-1}(\psi(P))$, where $\psi$ is the projection map
$\psi=(x_1,\dots,x_{j-1},x_{j+1},\dots,x_{m}) : \mathbb{A}^m(\mathbb{F}_{q})\rightarrow \mathbb{A}^{m-1}(\mathbb{F}_{q})$.  Note that the fibres 
$\phi^{-1}(\phi(P))$ and $\psi^{-1}(\psi(P))$ only meet at $P$.  In particular, when $\mathfrak{P}\subseteq \mathcal{H}(q-1,\dots,q-1)$,   then each coordinate of a codeword in $\mathcal{C}(\mathcal{P},V(\mathfrak{P}))$ has $m$ disjoint recovering sets. We have the following result.
  
\begin{proposition}
Let $\mathfrak{P}\subseteq\mathcal{H}((q,\dots,q)$ be a polytope  and let $\mathcal{C}=\mathcal{C}(\mathcal{P},V(\mathfrak{P}))$. 
If $\mathfrak{P}$ has $i$-degree $r-1$  for some $i_1,\dots, i_t$ and  $1\le r\le q-1$, then the code $\mathcal{C}(\mathcal{P},V(\mathfrak{P}))$ is an LRC code with $t$-availability and locality $r$ for all  $i_1,\dots, i_t$.   
\end{proposition}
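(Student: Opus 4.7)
The plan is to produce, for each variable $x_{i_j}$ with $j=1,\dots,t$, one recovering set for a given coordinate of a codeword, and then argue that the $t$ recovering sets so obtained can be chosen pairwise disjoint. The first part is essentially a repetition of the single-variable recovering argument used earlier in the paper; the delicate (though easy) bit is the disjointness.

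First I would set up the projection maps $\psi_j=(x_1,\dots,x_{i_j-1},x_{i_j+1},\dots,x_m):\mathbb{A}^m(\mathbb{F}_{q})\to \mathbb{A}^{m-1}(\mathbb{F}_{q})$ for $j=1,\dots,t$. Since the $i_j$-degree of $V(\mathfrak{P})$ is at most $r-1$, we may decompose
$$
V(\mathfrak{P})=\bigoplus_{i=0}^{r-1} V'_{j,i}\, x_{i_j}^{i},\qquad V'_{j,i}\subseteq \mathcal{R}[x_1,\dots,x_{i_j-1},x_{i_j+1},\dots,x_m],
$$
so that every $f\in V(\mathfrak{P})$ restricted to a fibre $\psi_j^{-1}(S)$ is a univariate polynomial in $x_{i_j}$ of degree at most $r-1$. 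The fibre $\psi_j^{-1}(\psi_j(P))$ consists of the $q$ points of $\mathcal{P}$ that agree with $P$ in every coordinate except the $i_j$-th, and since $r\le q-1$ we can pick inside this fibre a set $R(P,j)$ of $r$ points different from $P$. By Lagrangian interpolation in the variable $x_{i_j}$, the value $f(P)$ is determined by the values of $f$ on $R(P,j)$, so $R(P,j)$ is a recovering set for the coordinate at $P$, of size $r$.

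Next I would verify disjointness: if $j_1\neq j_2$ and $Q\in\psi_{j_1}^{-1}(\psi_{j_1}(P))\cap \psi_{j_2}^{-1}(\psi_{j_2}(P))$, then $Q$ coincides with $P$ in all coordinates other than the $i_{j_1}$-th and, likewise, in all coordinates other than the $i_{j_2}$-th; hence $Q=P$. Consequently the fibres $\psi_j^{-1}(\psi_j(P))$ for $j=1,\dots,t$ meet pairwise only at $P$, and the recovering sets $R(P,1),\dots,R(P,t)$, chosen inside these fibres and excluding $P$, are automatically pairwise disjoint. This is precisely the definition of $t$-availability with locality $r$.

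The main obstacle, such as it is, is just this last combinatorial check; the analytical content is already subsumed in the one-variable interpolation argument that appeared in the affine variety and Reed--Muller subsections. No new estimate on $k$ or $d$ is needed, because the statement is only about locality and availability.
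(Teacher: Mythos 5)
Your argument is correct and coincides with the one the paper gives (in the paragraph immediately preceding the proposition): decompose $V(\mathfrak{P})$ as $\bigoplus_{i=0}^{r-1}V'_{j,i}\,x_{i_j}^{i}$ for each $j$, recover $f(P)$ by univariate Lagrangian interpolation on $r$ of the $q$ points of the fibre $\psi_j^{-1}(\psi_j(P))$, and note that these fibres through $P$ pairwise intersect only in $P$, so the recovering sets are disjoint. Nothing is missing.
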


\begin{example}
Let $q=7$, $\mathcal{P}=\mathbb{A}^2({\mathbb{F}_7})$ and $\mathfrak{P}\subseteq \Delta(6)\subset \mathbb{N}_0^2$. \newline
(a) If $\mathfrak{P}=\Delta(6)$ then $\mathcal{C}(\mathcal{P},\mathfrak{P})$ is the Reed-Muller code $\mbox{RM}(6,2)$ of parameters $[49,28,7]$. Note that this code is not included in our construction as LRC code. In fact $\mbox {RM} (6,2) $ has locality $r\ge d(\mbox{RM}(6,2)^{\perp})-1=d(\mbox{RM}(5,2))-1=13$. Let
$\mathfrak{P}_1=\Delta(6)\setminus \{ (0,6),(6,0)\}$. Then $\mathcal{C}_1=\mathcal{C}(\mathcal{P},\mathfrak{P}_1)$ is a $[49,26,12]$ code. The best known minimum distance for a $[49,26]_7$ linear code is  $d=14$, \cite{CodeTables}. Let
$\mathfrak{P}_2=\mathfrak{P}_1\setminus \{ (1,1)\}$. Then $\mathcal{C}_2=\mathcal{C}(\mathcal{P},\mathfrak{P}_2)$ is a $[49,25,14]$  code. Both $\mathcal{C}_1$ and $\mathcal{C}_2$ are LRC codes of locality $r=6$ for which any coordinate has two disjoint recovering sets, and the recovery can be performed through checksum. \newline
(b) If $\mathfrak{P}=\Delta(5)$ then $\mathcal{C}=\mathcal{C}(\mathcal{P},\mathfrak{P})$ is the Reed-Muller code $\mbox{RM}(5,2)$ of parameters $[49,21,14]$. This is an LRC code of locality $r=6$. Let 
$\mathfrak{P}_1=\Delta(6)\setminus \{ (0,5),(5,0)\}$. Then $\mathcal{C}_1=\mathcal{C}(\mathcal{P},\mathfrak{P}_1)$ is a $[49,19,18]$ code. The best known minimum distance for a $[49,19]_7$ linear code is  $d=20$, \cite{CodeTables}. Let
$\mathfrak{P}_2=\mathfrak{P}_1\setminus \{ (1,1)\}$. Then $\mathcal{C}_2=\mathcal{C}(\mathcal{P},\mathfrak{P}_2)$ is a $[49,18,20]$  code.
The best known minimum distance for a $[49,18]_7$ linear code is  $d=21$, \cite{CodeTables}. 
Both $\mathcal{C}_1$ and $\mathcal{C}_2$ are LRC codes of locality $r=5$. As in (a), any coordinate has two disjoint recovering sets, and the recovery can be performed through checksum.   
\end{example}
   
\section*{Acknowledgments}

The first author was supported by Spanish Ministerio de Econom\'{\i}a y Competitividad  under grant MTM2015-65764-C3-1-P MINECO/FEDER. The second author was supported by CNPq-Brazil under grants 159852/2014-5 and 201584/2015-8.


\begin{thebibliography}{00}

\bibitem{ballico}  
E. Ballico, C. Marcolla,
Higher Hamming weights for locally recoverable codes on algebraic curves,
Finite Fields Appl.  40  (2016), 61--72.

\bibitem{barg2}  
A. Barg, I. Tamo, S. Vladut, 
Locally recoverable codes on algebraic curves,
in Proceedings of ISIT-2015, 
Hong Kong,  2015, 1252--1256.

\bibitem{barg3}  
A. Barg, I. Tamo, S. Vladut,
Locally recoverable codes on algebraic curves, 
ArXiv:1603.08876.

\bibitem{Vieta}
D. Cox, J. Little, D. O'Shea, 
Ideals, Varieties, and Algorithms,
Springer, New York, 1992. 

\bibitem{AfineVariey}
O. Geil,
Evaluation codes from an Affine Variety code perspective,
in  E. Martinez (Ed.), Advances in Algebraic Geometry codes,  World Scientific, Hackensack, 2008,  153--180.

\bibitem{gopalan} 
P. Gopalan, C. Huang, H. Simitci, S. Yekhanin,
On the locality of codeword symbols,
IEEE Trans. Inform. Theory  58(11)  (2012),  6925--6934. 

\bibitem{CodeTables}
M. Grassl, 
Bounds on the minimum distance of linear codes.
Online available at http://www.codetables.de. Accessed on 2017-05-15.

\bibitem{Klein}
J. Hansen,
Codes on the Klein quartic, ideals and decoding, 
IEEE Trans. Inform. Theory 33(6) (1987),  923--925. 

\bibitem{matthews}
K. Haymaker, B. Malmskog, G.L. Mathews, Locally recoverable codes with availability $t\geq 2$ from fiber products of curves. ArXiv: 1612.03841, 2016.

\bibitem{RM}
T. H\o holdt, J.H. van Lint,  R. Pellikaan,
Algebraic geometry codes,
in V.S. Pless, W.C. Huffman, R.A. Brualdi (Eds.), Handbook of Coding Theory, 
Elsevier, Amsterdam 1998, 871--961.

\bibitem{toric1} 
J. Little, R. Schwarz,
On toric codes and multivariate Vandermonde matrices,
Appl. Algebra Engrg. Comm. Comput.  18(4)  (2007),  349--367.

\bibitem{magma}
Magma Computational Algebra System. Online available at http://magma. maths.usyd.edu.au/magma/.

\bibitem{maharaj} 
H. Maharaj, 
Explicit constructions of algebraic-geometric codes, 
IEEE Trans. Inform. Theory 51(2) (2005),  714--722.

\bibitem{Rendiconti}
C. Munuera,
An algorithm to compute the number of points on elliptic curves of j-invariant 0 or 1728 over a finite field,
Rend. Circ. Mat. Palermo (II)  42(1)  (1993), 106--116.

\bibitem{availability}
A.S. Rawat, D.S. Papailiopoulos, A.G. Dimakis, S. Vishwanath, 
Locality and availability in distributed storage, 
in Proceedings of ISIT-2014, Honolulu, 2014,  681--685.

\bibitem{largo1}      
N. Silberstein, A. Singh Rawat,  S. Vishwanath, 
Error-correcting regenerating and locally repairable codes via rank-metric codes, 
IEEE Trans. Inform. Theory 61(11) (2015), 5765--5778.

\bibitem{prod}
I. Soprunov, J. Soprunova,
Bringing toric codes to the next dimension,
SIAM J. Discrete Math.  24(2)  (2010),  655--665.  

\bibitem{barg1} 
I. Tamo,  A. Barg, 
A family of optimal locally recoverable codes,
IEEE Trans. Inform. Theory 60(8)  (2014),  4661--4676.

\end{thebibliography}
\end{document}